\documentclass[11pt]{article}

\usepackage{amssymb,amsmath,amsthm,authblk,geometry,caption,natbib,mathrsfs,titlesec,tikz,subcaption,placeins,indentfirst,hyperref}
\usetikzlibrary{arrows.meta}

\definecolor{niceblue}{HTML}{0072B2}

\hypersetup{
  colorlinks=true,
  citecolor=niceblue,
  linkcolor=niceblue,
  urlcolor=niceblue,
  pdfborder={0 0 0}
}

\newtheoremstyle{myplain}%
  {1.0\baselineskip}
  {1.0\baselineskip}
  {\itshape}
  {}
  {\bfseries}
  {}
  { }
  {}

\newtheoremstyle{mydefinition}%
  {1.0\baselineskip}
  {1.0\baselineskip}
  {}
  {}
  {\bfseries}
  {}
  { }
  {}

\theoremstyle{myplain}
\newtheorem{theorem}{Theorem}
\newtheorem{proposition}{Proposition}
\newtheorem{lemma}{Lemma}
\newtheorem{corollary}{Corollary}

\theoremstyle{mydefinition}
\newtheorem{definition}{Definition}
\newtheorem{example}{Example}
\newtheorem{assumption}{Assumption}
\newtheorem{remark}{Remark}

\titlespacing*{\section}{0pt}{0.4\baselineskip}{0.6\baselineskip}
\titlespacing*{\subsection}{0pt}{0.6\baselineskip}{0.6\baselineskip}


\tikzset{
    basic/.style={
      node distance=2cm,
      every node/.style={circle, minimum size=1cm, text=black},
      >={Stealth},
      thick,
      scale=0.88,
      transform shape,
      color=niceblue
    }
}

\title{Learning to Agree under Pseudo-Reciprocity}
\author{Shinya Sugiura\thanks{\textit{E-mail address:} shinya.sugiura06@gmail.com}}
\affil{Department of Economics, Hitotsubashi University}

\date{August 2026}

\begin{document}

\maketitle

\begin{abstract}
\noindent
We characterize the communication networks that ensure consensus in rational social learning under the sure-thing principle. A network ensures consensus if and only if it is \textit{pseudo-reciprocal}, a property under which reciprocity may fail between individuals but holds at a coarser group level. Prior work has assumed message exchange to be bidirectional, yet such pairwise reciprocity proves largely dispensable. We show that a single bidirectional link can suffice, however large the population.

\medskip
\noindent\textit{Keywords}: Social learning; consensus; communication networks; reciprocity; sure-thing principle

\smallskip
\noindent\textit{JEL classification}: D83; D85
\end{abstract}

\bigskip

\newpage
\section{Introduction} \label{sec:introduction}
What an agent learns depends not only on whom she hears from, but also on whom those agents hear from, and so on. These communication links form a network, which determines whether repeated communication brings the agents to agreement. Which communication networks ensure consensus? A classical answer builds on the agreement theorem of \citet{aumann}. When agents with a common prior in a complete network repeatedly announce their posteriors and update on one another's announcements, they eventually reach a common posterior \citep{geanakoplosandpolemarchakis}. This conclusion has been extended to a decision-theoretic setting in which agents follow a common behavioral rule satisfying \citeauthor{savage}'s (\citeyear{savage}) \textit{sure-thing principle}, a fundamental rationality requirement.\footnote{See, e.g., \citet{krasucki}, \citet{muellerfrank}, and \citet{crescenzi}. Some earlier contributions impose conditions stronger than the sure-thing principle; see, e.g., \citet{parikhandkrasucki} and \citet{menager}.} Notably, later work \citep{muellerfrank,crescenzi} shows that consensus obtains for a much broader class of networks: it suffices that (i) information can flow from any agent to any other, possibly indirectly (\textit{strong connectedness}), and (ii) whenever one agent can send messages to another, the latter can send messages back (\textit{pairwise reciprocity}).

Despite this progress, the basic question has remained open: under the sure-thing principle, exactly which networks ensure consensus? We introduce a network property, \textit{pseudo-reciprocity}, and show that a network ensures consensus if and only if it is pseudo-reciprocal. Intuitively, pseudo-reciprocity captures situations in which reciprocity may fail between individuals but emerges at a coarser, group level. Consider a firm in which a manager delegates tasks through a team leader, while progress reports return through a separate project coordinator. The leader and the coordinator send messages to each other. Although no individual employee communicates with the manager in both directions, the manager and the team as a whole do: directives flow into the team through one channel, and feedback returns through another. Pseudo-reciprocity formalizes this group-level view.

Our main contribution is to identify the form of reciprocity that consensus requires. It is not the pairwise exchange assumed throughout the literature \citep{cave, krasucki, muellerfrank, crescenzi}, but reciprocity at the group level. This makes pairwise reciprocity largely dispensable: even a single bidirectional communication link can suffice, however large the population, and Corollary~\ref{cor:single-core-characterization} characterizes exactly when it does. A second contribution is technical, and has two parts. First, our results hold for state spaces of arbitrary cardinality and for communication of any ordinal length, a setting that only \citet{crescenzi} has previously admitted. Second, the characterization is operational: pseudo-reciprocity can be decided by an iterative procedure whose outcome does not depend on the order of its steps (Proposition~\ref{prop:confluence} and Corollary~\ref{cor:greedy}).

The paper proceeds as follows. Section~\ref{sec:preliminaries} presents the model. Section~\ref{sec:pseudo-reciprocity} defines pseudo-reciprocity. Section~\ref{sec:results} states the results, and Section~\ref{sec:proofs} collects the proofs.

\section{Preliminaries} \label{sec:preliminaries}
\subsection{Basic notions}
Let $\Omega$ be a set of states of the world. The communication structure is represented by a directed network $(I, g)$, where $I = \{1, \ldots, n\}$ with $n \geq 2$ is the set of agents and $g:I \times I \rightarrow \{0,1\}$ is the link function: $g(i,j)=1$ means that agent $i$ can send messages to agent $j$. We assume $g(i,i)=0$ for all $i \in I$. Each agent $i$ is endowed with a partition $\Pi_i$ of $\Omega$. For any $\omega \in \Omega$, we write $\Pi_i(\omega)$ for the element of $\Pi_i$ that contains $\omega$. Agent~$i$ observes $\Pi_i(\omega)$ as her private information when the true state is $\omega$.

Denote by $\boldsymbol\Pi$ the collection of all partitions of $\Omega$. We say that a partition $\Pi$ is \textit{finer} than a partition $\Pi^{\prime}$, and $\Pi^{\prime}$ \textit{coarser} than $\Pi$, if $\Pi(\omega) \subseteq \Pi^{\prime}(\omega)$ for all $\omega \in \Omega$; this is denoted by $\Pi^{\prime} \leq \Pi$. It is well known that the partially ordered set $(\boldsymbol{\Pi}, \leq)$ forms a complete lattice. For any pair $\Pi, \Pi^{\prime} \in \boldsymbol{\Pi}$, we denote their meet by $\Pi \wedge \Pi^{\prime}$ and their join by $\Pi \vee \Pi^{\prime}$.

Let $D$ be a non-empty set of decisions. Each agent $i$'s decision rule $f_i:\Omega\rightarrow D$ is determined by a common function $f:2^{\Omega}\rightarrow D$ with $f_i(\omega)=f(\Pi_i(\omega))$. This captures the assumption that the agents are \textit{like-minded}: they make identical decisions when given the same information. That is, for any $\omega \in \Omega$ and any $i,j \in I$, if $\Pi_{i}(\omega)=\Pi_{j}(\omega)$, then $f_{i}(\omega)=f_{j}(\omega)$. We call a triple $\mathscr{D}=(\Omega,D,f)$ a \textit{decision environment}. For a partition profile $\Pi=(\Pi_1,\ldots,\Pi_n)\in\boldsymbol{\Pi}^n$, we write $f_i^{\Pi}:=f\circ\Pi_i$ for agent $i$'s decision rule under $\Pi$. We define the function $W: \boldsymbol{\Pi} \rightarrow \boldsymbol{\Pi}$ by $W(\Pi)(\omega)=\{\omega^{\prime} \in \Omega \mid f(\Pi(\omega^{\prime}))=f(\Pi(\omega))\}$ for each $\omega \in \Omega$. For each $i \in I$, $W(\Pi_i)$ is called the \textit{working partition} of agent $i$. The cell $W(\Pi_i)(\omega)$ is the event in which agent $i$ chooses the decision $f_i(\omega)$. Note that, by definition, $W(\Pi_i)$ is coarser than $\Pi_i$.

We maintain two assumptions, one on decisions and one on the network; Section~\ref{sec:local} dispenses with the second. The first is due to \citet{savage} and was introduced into 
the consensus literature by \citet{cave} and \citet{bacharach}.

\begin{assumption}[Sure-thing principle]\label{ass:stp}
The decision function $f \colon 2^{\Omega} \to D$ satisfies the 
\textit{sure-thing principle} (STP): for any non-empty family $\mathcal{E} \subseteq 2^{\Omega}$ of pairwise disjoint non-empty events and any $d \in D$, 
if $f(E) = d$ for every $E \in \mathcal{E}$, then 
$f\!\left( \bigcup_{E \in \mathcal{E}} E \right) = d$.
\end{assumption}

STP has been viewed as a fundamental requirement of rationality (see, e.g., \citealp{bacharach}). For instance, if an investor would acquire a firm whether the incumbent or the challenger wins an upcoming election, then STP requires her to acquire it even before the result is known. 

The canonical formal example is conditional probability: for any probability measure $\mu$ on $\Omega$ and any event $F \subseteq \Omega$, the map $E \mapsto \mu(F \mid E)$ satisfies STP on events of positive $\mu$-measure. STP also covers non-probabilistic decisions. Consider an agent
choosing from a finite ordered set $D=\{d_1,\ldots,d_m\}$ with
$d_1\succ\cdots\succ d_m$. Higher-ranked decisions are more desirable but may
be infeasible in some states. For each state $\omega$, let
$C(\omega)\subseteq D$ be the non-empty set of feasible decisions, and assume
the fallback $d_m$ is feasible in every state. For an information event $E$,
define
\[
    C(E):=\bigcap_{\omega\in E} C(\omega),
    \qquad
    f_C(E):=\max_{\succ} C(E),
\]
with the convention $C(\emptyset)=D$. Then $f_C(E)$ is the most desirable
decision feasible in every state consistent with $E$, and $f_C$ satisfies STP.

The second assumption concerns the communication network. For a
network $(I,g)$, let $\tilde g(i,j):=\max\{g(i,j),g(j,i)\}$ be its underlying undirected relation.

\begin{assumption}[Weak connectedness]\label{ass:weak-connectedness}
The communication network $(I,g)$ is \textit{weakly connected}: for any $i,j \in I$ with $i \neq j$, there exists a path $i_0,\ldots,i_K\in I$ such that $i_0=i$, $i_K=j$, and $\tilde g(i_{k-1},i_k)=1$ for all $k \in \{1,\ldots, K\}$.
\end{assumption}

Weak connectedness is necessary for a network to ensure consensus: were it to fail, the agents would split into groups with no communication path---and hence no information flow---between them. Assuming it therefore involves no loss of generality; Section~\ref{sec:local} makes this precise.

\subsection{Communication and updating}
We now describe how information is revised through repeated message exchange.
Communication follows the fixed network $(I,g)$. At each stage, every
agent receives messages from her \textit{senders}: the set of senders of
agent $i$ is $S(i)=\{ j \in I \mid g(j,i)=1\}$.

At state $\omega$, each agent $j \in S(i)$ sends to $i$ a message conveying her decision $f_j(\omega)=f(\Pi_j(\omega))$. We assume that the partition profile, the decision function $f$, and the network $(I,g)$ are commonly known among the agents, while the realization of each agent's cell remains her private information. Agent $i$ can therefore compute the working partition $W(\Pi_j)$ of each sender $j\in S(i)$, and the message $f_j(\omega)$ reveals to her the event $W(\Pi_j)(\omega)$. 

Upon receiving these messages, agent $i$ updates her partition $\Pi_i$ using an \textit{information update function} $h_i:\boldsymbol{\Pi}^n \rightarrow \boldsymbol{\Pi}$, where $\boldsymbol{\Pi}^n$ denotes the $n$-fold product of $\boldsymbol{\Pi}$. We define
\begin{equation}
    h_i(\Pi_1,\ldots,\Pi_n)
    :=\Pi_i\vee \bigvee_{j\in S(i)}W(\Pi_j),
    \label{eq:update}
\end{equation}
where the join over the empty set is understood as the coarsest partition
$\{\Omega\}$. Hence, if $S(i)=\emptyset$, the update rule gives
$h_i(\Pi_1,\ldots,\Pi_n)=\Pi_i$.

In other words, once agent $i$ has received her senders' messages, she
refines her partition $\Pi_i$ to the coarsest common refinement of $\Pi_i$
and the senders' working partitions $\{W(\Pi_j)\}_{j\in S(i)}$. Thus, at state $\omega$, agent $i$ rules out all
states that are inconsistent with the event
\[
    \Pi_i(\omega)\cap\bigcap_{j\in S(i)}W(\Pi_j)(\omega).
\]
This updating procedure is standard in the literature
\citep{parikhandkrasucki, krasucki, houyandmenager, tsakasandvoorneveld}, except
that here communication follows a fixed network as in \citet{muellerfrank} and
\citet{crescenzi}.

Given an initial partition profile
$\Pi^0=(\Pi_1^0,\ldots,\Pi_n^0)\in\boldsymbol\Pi^n$, define the learning process
$(\Pi_i^\alpha)_{i\in I,\alpha}$ recursively over all ordinals as follows.
For every ordinal $\alpha$, write $\Pi^\alpha:=(\Pi_1^\alpha,\ldots,\Pi_n^\alpha)$.
At successor ordinals, define
\[
    \Pi_i^{\alpha+1}:=h_i(\Pi^\alpha)
    =\Pi_i^\alpha\vee\bigvee_{j\in S(i)}W(\Pi_j^\alpha)
    \qquad\text{for all }i\in I,
\]
and at every nonzero limit ordinal $\lambda$, define
\[
    \Pi_i^\lambda:=\bigvee_{\alpha<\lambda}\Pi_i^\alpha
    \qquad\text{for all }i\in I.
\]
Since the update rule \eqref{eq:update} is also commonly known, the profile
$\Pi^\alpha$ is commonly known at every stage $\alpha$, and the message sent by
agent $j$ at stage $\alpha$ reveals to its recipients the event
$W(\Pi_j^\alpha)(\omega)$.

\subsection{Consensus}

We define consensus through the steady states of the communication process. For a decision environment $\mathscr{D}=(\Omega,D,f)$, let 
\begin{align*}
    F_{\mathscr{D}}(I,g)
    &:=\bigl\{(\Pi_1,\ldots,\Pi_n)\in\boldsymbol{\Pi}^n:
        h_i(\Pi_1,\ldots,\Pi_n)=\Pi_i\text{ for all }i\in I\bigr\},\\
    A_{\mathscr{D}}
    &:=\bigl\{(\Pi_1,\ldots,\Pi_n)\in\boldsymbol{\Pi}^n:
        f(\Pi_i(\omega))=f(\Pi_j(\omega))
        \text{ for all }i,j\in I\text{ and }\omega\in\Omega\bigr\}.
\end{align*}

The set $F_{\mathscr D}(I,g)$ is the set of fixed points of the update rule: profiles
that no further round of communication refines. From any initial profile, the
process $(\Pi^\alpha)$ is increasing in $\alpha$. Fix an ordinal $\kappa$ whose
cardinality exceeds that of $\boldsymbol\Pi^n$. The profiles
$(\Pi^\alpha)_{\alpha<\kappa}$ cannot then be pairwise distinct, so
$\Pi^\alpha=\Pi^\beta$ for some $\alpha<\beta<\kappa$. Since the chain is
increasing, $\Pi^\alpha\leq\Pi^{\alpha+1}\leq\Pi^\beta=\Pi^\alpha$, whence
$\Pi^{\alpha+1}=\Pi^\alpha$. The chain therefore stabilizes, whether the state
space is finite or infinite. The stabilized profile satisfies
$h_i(\Pi^\alpha)=\Pi_i^\alpha$ for all $i\in I$ and hence lies in
$F_{\mathscr D}(I,g)$. Conversely, every profile in $F_{\mathscr D}(I,g)$ is
stationary. Thus $F_{\mathscr D}(I,g)$ is precisely the set of profiles at which
the process can settle. The set $A_{\mathscr D}$, in turn, consists of the
agreeing profiles. Requiring $F_{\mathscr D}(I,g)\subseteq A_{\mathscr D}$
therefore says that learning ends in agreement from every initial profile.

\begin{definition}\label{def:consensus-guarantee}
A network $(I,g)$ \emph{ensures consensus} if
$F_{\mathscr D}(I,g)\subseteq A_{\mathscr D}$ for every decision environment
$\mathscr D=(\Omega,D,f)$.
\end{definition}

\section{Pseudo-reciprocity}\label{sec:pseudo-reciprocity}

We now formalize pseudo-reciprocity. It is a property not of individual links but of the network viewed at a coarser level: agents are collected into groups, and reciprocity is evaluated between these groups. To make this idea precise, we first define networks whose nodes are groups of agents, and then introduce a merger operation that combines two groups whenever they are mutually linked.

A \emph{grouping} of $I$ is a partition $\mathcal P$ of $I$; its elements
are called \emph{blocks}. A
\emph{network on $\mathcal P$}, or a \emph{grouped network}, is a pair $(\mathcal P,g_{\mathcal P})$, where
$g_{\mathcal P}:\mathcal P\times\mathcal P\to\{0,1\}$ and
$g_{\mathcal P}(P,P)=0$ for every $P\in\mathcal P$. The original network
$(I,g)$ is identified with the network on the singleton grouping
$\mathcal I:=\{\{i\}:i\in I\}$, where
$g_{\mathcal I}(\{i\},\{j\})=g(i,j)$.

The pairs of groups we will merge are those linked in both directions. We
say that a two-element subset $\{P,Q\}\subseteq\mathcal P$ is a
\emph{mutual pair} of $(\mathcal P,g_{\mathcal P})$ if
$g_{\mathcal P}(P,Q)=g_{\mathcal P}(Q,P)=1$, and we write
$\mathscr M(\mathcal P,g_{\mathcal P})$ for the set of mutual pairs of
$(\mathcal P,g_{\mathcal P})$.
A mutual pair of the original network is called a \emph{direct mutual pair};
we identify a direct mutual pair $\{\{i\},\{j\}\}$ with the pair $\{i,j\}$ of
agents.
A network $(\mathcal P,g_{\mathcal P})$ is \emph{(pairwise) reciprocal} if
$g_{\mathcal P}(P,Q)=g_{\mathcal P}(Q,P)$ for all
$P,Q\in\mathcal P$; in particular, the original network is pairwise
reciprocal if $g(i,j)=g(j,i)$ for all $i,j\in I$. Pairwise reciprocity of
the original network, together with strong connectedness, is the weakest
sufficient condition for consensus established in prior work \citep{cave, krasucki, muellerfrank, crescenzi}.\footnote{A network $(I,g)$ is \emph{strongly connected} if, for any $i,j\in I$ with $i\neq j$, there exists a path $i_0,\ldots,i_K\in I$ such that $i_0=i$, $i_K=j$, and $g(i_{k-1},i_k)=1$ for all $k\in\{1,\ldots,K\}$.}

\begin{definition}[Merger]\label{def:merger}
Let $\{P,Q\}\in \mathscr M(\mathcal P,g_{\mathcal P})$. Define the quotient
map $\phi:\mathcal P\to \phi(\mathcal P)$ by
\[
\phi(X):=\begin{cases}
P\cup Q, & X\in\{P,Q\},\\
X, & \text{otherwise.}
\end{cases}
\]
The \emph{merger} of $P$ and $Q$ replaces $P$ and $Q$ with $P\cup Q$ and
produces the network $(\phi(\mathcal P),g_{\phi(\mathcal P)})$ defined by
\[
    g_{\phi(\mathcal P)}(R,S):=
    \begin{cases}
    \max\{g_{\mathcal P}(A,B):\phi(A)=R,\ \phi(B)=S\}, & R\neq S,\\
    0, & R=S.
    \end{cases}
\]
Thus the merged group inherits every incoming and outgoing link of its members,
while self-loops are suppressed.
\end{definition}

This operation is the analogue of edge contraction in graph minor theory, but merging is allowed only along mutual pairs. Unlike ordinary graph minors,
there is no deletion operation; mergers only coarsen the grouping.

\begin{definition}\label{def:mutual-minor}
A network $(\mathcal Q,g_{\mathcal Q})$ is a \emph{mutual minor} of
$(\mathcal P,g_{\mathcal P})$, written
$(\mathcal Q,g_{\mathcal Q})\preceq(\mathcal P,g_{\mathcal P})$, if it can be
obtained from $(\mathcal P,g_{\mathcal P})$ by a finite (possibly empty)
sequence of mergers.
\end{definition}

Intuitively, a mutual minor of $(\mathcal P,g_{\mathcal P})$ is a coarser view of it obtained by fusing pairs of groups between which messages flow in both directions. The relation $\preceq$ is a partial order. Pseudo-reciprocal networks are defined by this relation.

\begin{definition}\label{def:pseudo-reciprocity}
A network $(I,g)$ is \emph{pseudo-reciprocal} if it has a
reciprocal mutual minor.
\end{definition}

In other words, a network is pseudo-reciprocal if, after finitely many mergers of mutually linked groups, the resulting grouped network is reciprocal. If a network is already reciprocal, it is trivially pseudo-reciprocal, since it is its own reciprocal mutual minor. Pairwise reciprocity, the condition assumed throughout the prior literature, is therefore a special case of pseudo-reciprocity.

\begin{remark}\label{rem:strong-connectedness}
Under Assumption~\ref{ass:weak-connectedness}, every pseudo-reciprocal network is
strongly connected. This follows from
Lemmas~\ref{lem:block-connectivity} and \ref{lem:pseudo-one-block} in Section \ref{sec:proofs}.
\end{remark}

\begin{example}\label{ex:pseudo-reciprocity-stepwise}
Figure~\ref{fig:pseudo-reciprocity-stepwise} gives a merger sequence that
witnesses pseudo-reciprocity. Start from the singleton grouping
$\mathcal P_0=\{\{i\},\{j\},\{k\},\{\ell\},\{m\}\}$, shown in
Panel~\subref{fig:pseudo-step-a}. Write $g_t$ for the grouped link
function on $\mathcal P_t$. The only mutual pair of $(\mathcal P_0,g_0)$ is
$\{\{j\},\{k\}\}$, so the first merger necessarily replaces $\{j\}$ and
$\{k\}$ by $\{j,k\}$. This produces $\mathcal P_1=\{\{i\},\{j,k\},\{\ell\},\{m\}\}$, shown in Panel~\subref{fig:pseudo-step-b}.

By Definition~\ref{def:merger}, the block $\{j,k\}$ inherits all incoming and
outgoing links of $j$ and $k$. Hence the links $j\to i$ and $i\to k$ make
$\{j,k\}$ and $\{i\}$ mutually linked, while $k\to \ell$ and $\ell\to j$ do
the same for $\{j,k\}$ and $\{\ell\}$. In particular,
$\{\{j,k\},\{\ell\}\}$ is a mutual pair of $(\mathcal P_1,g_1)$.
Merging these two blocks gives $\mathcal P_2=\{\{i\},\{j,k,\ell\},\{m\}\}$, as in Panel~\subref{fig:pseudo-step-c}.

The network on $\mathcal P_2$ is reciprocal: $i$ and $m$ are each
mutually linked with $\{j,k,\ell\}$, and there is no link in either direction
between $i$ and $m$. Therefore $(\mathcal P_2,g_2)$ is a reciprocal
mutual minor of $(\mathcal P_0,g_0)$. By
Definition~\ref{def:pseudo-reciprocity}, the original network is
pseudo-reciprocal.
\end{example}

\begin{figure}[!htb]
\centering
\begin{subfigure}[b]{0.41\textwidth}
\centering
\begin{tikzpicture}[basic,scale=1.06,transform shape]
    \node (i) at (0,0) {$i$};
    \node (j) at (2.5,1) {$j$};
    \node (ell) at (5,1) {$\ell$};
    \node (k) at (2.5,-1) {$k$};
    \node (m) at (5,-1) {$m$};
    \draw[->] (j) -- (i);
    \draw[->] (k) -- (ell);
    \draw[<->] (k) -- (j);
    \draw[->] (ell) -- (j);
    \draw[->] (i) -- (k);
    \draw[->] (ell) -- (m);
    \draw[->] (m) -- (k);
\end{tikzpicture}
\caption{Original network}
\label{fig:pseudo-step-a}
\end{subfigure}
\begin{subfigure}[b]{0.41\textwidth}
\centering
\begin{tikzpicture}[basic,scale=1.06,transform shape]
    \path (0,1) -- (0,-1.5);
    \node (mi) at (0,0) {$i$};
    \node (mjk) at (2.5,0) {$\{j,k\}$};
    \node (mell) at (5,1) {$\ell$};
    \node (mm) at (5,-1) {$m$};
    \draw[<->] (mi) -- (mjk);
    \draw[<->] (mjk) -- (mell);
    \draw[->] (mell) -- (mm);
    \draw[->] (mm) -- (mjk);
\end{tikzpicture}
\caption{After merging $\{j\}$ and $\{k\}$}
\label{fig:pseudo-step-b}
\end{subfigure}

\vspace{0.6em}

\begin{subfigure}[b]{0.41\textwidth}
\centering
\begin{tikzpicture}[basic,scale=1.08,transform shape]
    \path (0,1) -- (0,-1.5);
    \node (i) at (0,0) {$i$};
    \node (jkell) at (2.5,0) {$\{j,k,\ell\}$};
    \node (m) at (5,0) {$m$};
    \draw[<->] (i) -- (jkell);
    \draw[<->] (m) -- (jkell);
\end{tikzpicture}
\caption{A reciprocal minor}
\label{fig:pseudo-step-c}
\end{subfigure}
\caption{A merger sequence witnessing pseudo-reciprocity}
\label{fig:pseudo-reciprocity-stepwise}
\end{figure}

\section{Results}\label{sec:results}

\subsection{The characterization}\label{sec:characterization}

We now state the main result. In a pseudo-reciprocal network, communication
may be one-way between individuals; yet once mutually linked groups are
successively merged, every remaining link between groups is reciprocated. For
agents who obey the sure-thing principle, this group-level
reciprocity is exactly the condition under which consensus is ensured.

\begin{theorem}\label{thm:pseudo-reciprocity}
A network ensures consensus if and only if it is pseudo-reciprocal.
\end{theorem}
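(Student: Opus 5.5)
The plan is to prove sufficiency by propagating agreement along the merger sequence, and necessity by building an explicit non-agreeing fixed point on a terminal mutual minor.

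\emph{Sufficiency.} Let $(\mathcal Q,g_{\mathcal Q})$ be a reciprocal mutual minor reached by mergers $\mathcal I=\mathcal P_0,\mathcal P_1,\ldots,\mathcal P_T=\mathcal Q$. Fix a decision environment and a fixed point $\Pi\in F_{\mathscr D}(I,g)$. The fixed-point condition says that $g(j,i)=1$ implies $W(\Pi_j)\le \Pi_i$, i.e.\ $\Pi_i$ refines $W(\Pi_j)$. Agreement of $i$ and $j$ at every state is equivalent to $f_i^{\Pi}=f_j^{\Pi}$, which forces $W(\Pi_i)=W(\Pi_j)$.

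I will prove by induction on $t$ that all agents in each block $P\in\mathcal P_t$ agree at every state. Write $W_P$ for their common working partition and $d_P(\omega)$ for their common decision. The case $t=0$ is trivial.

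For the inductive step, let $\{P,Q\}$ be the merged mutual pair. Choose $i\in P,\ j'\in Q$ with $g(i,j')=1$, and $j\in Q,\ i'\in P$ with $g(j,i')=1$. Fix $\omega$ and set $E:=W_P(\omega)\cap W_Q(\omega)$.

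\begin{itemize}
\item Since $\Pi_{j'}$ refines both $W_P=W(\Pi_i)$ and $W_Q=W(\Pi_{j'})$, the event $E$ is a disjoint union of cells of $\Pi_{j'}$.
\item Each of these cells lies in $W_Q(\omega)$, so $f$ takes the value $d_Q(\omega)$ on each of them. By STP, $f(E)=d_Q(\omega)$.
\item Symmetrically, using $i'$, the event $E$ is a disjoint union of cells of $\Pi_{i'}$, giving $f(E)=d_P(\omega)$.
\end{itemize}
Hence $d_P=d_Q$, and $P\cup Q$ agrees. This is the classical pairwise argument lifted to blocks.

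At stage $T$, the network $(\mathcal Q,g_{\mathcal Q})$ is reciprocal. Mergers preserve weak connectedness, so every link of $\mathcal Q$ lies in a mutual pair. The same computation applied along a path of such pairs, without merging, shows that all blocks of $\mathcal Q$ agree. Therefore $\Pi\in A_{\mathscr D}$.

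\emph{Necessity: reduction.} I first need a confluence fact. Merging preserves existing mutual pairs of untouched blocks, and the merger of a mutual pair commutes, up to the final grouping, with any other merger. A diamond-lemma argument then shows that all maximal merger sequences end at the same terminal minor $(\mathcal T,g_{\mathcal T})$, which has no mutual pair.

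If some mutual minor is reciprocal, then by weak connectedness every block in it can be further merged into one. So the network is pseudo-reciprocal if and only if $\mathcal T=\{I\}$. If the network is not pseudo-reciprocal, $\mathcal T$ has at least two blocks, is weakly connected, and has only one-way links between blocks.

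\emph{Necessity: construction.} Here I must exhibit $\mathscr D$ and a fixed point $\Pi\notin A_{\mathscr D}$. I would give all agents in a block $B$ a common partition $\Pi_B$. Within a block, every agent's receivers already refine the block's working partition, so it suffices to satisfy $W(\Pi_B)\le\Pi_C$ whenever $g_{\mathcal T}(B,C)=1$, while the decisions of different blocks differ somewhere.

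A trivial choice fails. Making each $W(\Pi_B)$ the coarsest partition $\{\Omega\}$ forces, via STP, $f(\Omega)=d_B$ for every $B$, so all blocks would agree. Thus the construction must exploit the one-wayness genuinely, for instance through a block $B$ with an outgoing but no incoming link from some neighbour $C$.

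My first attempt is to work in the feasibility class $f_C$. I would take $\Omega$ indexed by the blocks, with many copies of each index. Each $\Pi_B$ should separate exactly the information that $B$'s in-neighbours broadcast, and the feasibility sets $C(\omega)$ should be tuned so that $B$'s own decision reveals only what its out-neighbours already know, yet differs from a neighbour's decision at some state.

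This construction, and in particular handling directed cycles of length at least three in $\mathcal T$, is the step I expect to be the main obstacle. If it resists a direct attack, I would first treat the case where $\mathcal T$ is acyclic and then reduce cycles to it.
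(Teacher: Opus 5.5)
\paragraph{Verdict.} There is a genuine gap, and it sits in the necessity direction.

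\paragraph{What matches the paper.} Your sufficiency argument is correct and follows the paper: an induction along the merger sequence with a pairwise STP step. Your intersection $W_P(\omega)\cap W_Q(\omega)$ is a slightly more direct version of Lemma~\ref{lem:cave-bacharach}. Your reduction of necessity to block-constant partitions on a mutual-pair-free terminal minor is exactly Corollary~\ref{cor:lift}. Confluence is not even needed for the theorem, since the endpoint of any maximal merger sequence will do.

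\paragraph{The gap.} You leave open the separating construction itself, which is the substance of necessity. Worse, the route you sketch cannot work. Every feasibility rule $f_C$ is monotone: $E\subseteq E'$ gives $C(E')\subseteq C(E)$, hence $f_C(E)\succeq f_C(E')$.
\begin{itemize}
\item At any fixed point, a link $j\to i$ gives $\Pi_i(\omega)\subseteq W(\Pi_j)(\omega)$.
\item STP, applied to the $\Pi_j$-cells composing $W(\Pi_j)(\omega)$, gives $f(W(\Pi_j)(\omega))=f_j^\Pi(\omega)$.
\item Together these yield $f_i^\Pi(\omega)\succeq f_j^\Pi(\omega)$.
\end{itemize}
Around a directed cycle these inequalities collapse to equalities. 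So on any strongly connected network, no environment in the $f_C$ class separates any two agents. The directed $3$-cycle is strongly connected, has no mutual pair, and is its own terminal minor. It is therefore a non-pseudo-reciprocal network your approach can never refute. For the same reason there is nothing to ``reduce'' cycles to: the cyclic case is irreducible, and the acyclic case does not help.

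\paragraph{What is missing.} You need a \emph{non-monotone} STP rule under which each block's decision reveals only its own attribute. The paper's construction is as follows.
\begin{itemize}
\item Take $\Omega=\{0,1\}^{\mathcal Q}$.
\item Let $\Pi_R$ be generated by the coordinates in $N_R=\{R\}\cup\{S:g_{\mathcal Q}(S,R)=1\}$.
\item Set $f(E)=a$ if $E$ is a non-empty disjoint union of cells, possibly of different blocks' partitions, on each of which the owning block's coordinate is $0$.
\item Set $f(E)=b_R$ if $E$ is a non-empty union of $\Pi_R$-cells inside $\{\omega_R=1\}$.
\item Set $f(E)=E$ otherwise.
\end{itemize}
The construction works for these reasons.
\begin{itemize}
\item The absence of mutual pairs is exactly what makes the three cases disjoint, via a coordinate-flipping argument.
\item STP holds because the catch-all value $E$ can be attained by only one event.
\item $W(\Pi_R)=\{\{\omega_R=0\},\{\omega_R=1\}\}$, which $\Pi_S$ refines whenever $g_{\mathcal Q}(R,S)=1$.
\item $b_R\notin\{a,b_S\}$ separates every pair of blocks.
\end{itemize}
Giving ``mixed'' events fresh values is what breaks monotonicity and lets a directed cycle sustain disagreement.
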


\begin{remark}\label{rem:finite-necessity}
The characterization holds for state spaces of arbitrary cardinality and
communication of any ordinal length, generalizing \citet{crescenzi}, the only
prior work to admit this setting. Moreover, it is unchanged if
Definition~\ref{def:consensus-guarantee} is restricted to finite decision
environments: sufficiency holds for arbitrary state spaces, and necessity is
witnessed by a finite environment constructed in Section~\ref{sec:proofs-separation}.
\end{remark}

Definition~\ref{def:pseudo-reciprocity} requires only that some sequence of
mergers end in a reciprocal minor; the next result shows that the order of
mergers is in fact irrelevant.

Call a sequence of mergers \emph{maximal} if its terminal grouped network has
no mutual pair. Since every merger reduces the number of blocks by one, every
sequence of mergers is finite and extends to a maximal one.

\begin{proposition}\label{prop:confluence}
All maximal sequences of mergers from $(\mathcal I,g_{\mathcal I})$ terminate
at the same grouped network.
\end{proposition}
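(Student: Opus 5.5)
Since every merger strictly reduces the number of blocks, any two maximal sequences are finite. The plan is therefore to prove a diamond (local confluence) property and apply Newman's lemma. A slicker route is to identify the terminal grouping intrinsically, as the finest grouping closed under a certain merging condition, and show that every maximal sequence reaches it.

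\textbf{Key observation.} Merging only coarsens the grouping, and links between blocks are inherited. Concretely, if $(\mathcal Q,g_{\mathcal Q})\preceq(\mathcal P,g_{\mathcal P})$ with coarsening map $\psi:\mathcal P\to\mathcal Q$, then $g_{\mathcal Q}(R,S)=\max\{g_{\mathcal P}(A,B):\psi(A)=R,\psi(B)=S\}$ for $R\neq S$. I will check this by induction on the number of mergers, since the composition of two "max over preimages" operations is again a max over preimages. Equivalently, every grouped network reachable from $(\mathcal I,g_{\mathcal I})$ is the quotient of $g$ by its grouping: $g_{\mathcal Q}(R,S)=1$ if and only if $g(i,j)=1$ for some $i\in R$, $j\in S$, where $R\neq S$. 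So the network is determined by the grouping alone, and it suffices to show that all maximal sequences end at the same grouping.

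\textbf{Monotonicity of mutual pairs.} Suppose $\{P,Q\}$ is a mutual pair of $(\mathcal P,g_{\mathcal P})$ and $\mathcal Q$ is any grouping coarser than $\mathcal P$ that is reachable by mergers. I claim the blocks containing $P$ and $Q$ in $\mathcal Q$ are either equal or form a mutual pair. By the quotient description, a link from some agent of $P$ to some agent of $Q$ persists to the containing blocks unless they coincide, and the same holds in the reverse direction. This is the heart of the argument, and it is immediate from the quotient formula.

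\textbf{Conclusion.} Let $\mathcal P_0=\mathcal I,\mathcal P_1,\dots,\mathcal P_m$ and $\mathcal I=\mathcal Q_0,\dots,\mathcal Q_k$ be two maximal sequences. I show by induction on $t$ that each $\mathcal P_t$ is finer than $\mathcal Q_k$. The base case holds trivially. For the inductive step, $\mathcal P_{t+1}$ merges a mutual pair $\{P,Q\}$ of $\mathcal P_t$. Since $\mathcal P_t$ is finer than $\mathcal Q_k$, I would like to apply monotonicity with $\mathcal Q_k$ in place of $\mathcal Q$. The subtlety is that $\mathcal Q_k$ is reached from $\mathcal I$, not from $\mathcal P_t$. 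This is resolved by the quotient description, which makes the mutual-pair property depend only on the two groupings and on $g$. Hence the blocks of $\mathcal Q_k$ containing $P$ and $Q$ are either equal or mutually linked. They cannot be mutually linked, because $\mathcal Q_k$ has no mutual pair. So they coincide, and $\mathcal P_{t+1}$ is finer than $\mathcal Q_k$. Therefore $\mathcal P_m$ is finer than $\mathcal Q_k$, and by symmetry $\mathcal Q_k$ is finer than $\mathcal P_m$. The two groupings are equal, and by the quotient formula so are the grouped networks.

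The main obstacle is stating the quotient description carefully enough that monotonicity can be applied across unrelated merger sequences. Once that lemma is established, the rest is bookkeeping.
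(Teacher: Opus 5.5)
Your proof is correct, but it follows a different route from the paper's. The paper first proves a one-step diamond property: two distinct one-step successors of a reachable grouping have a common one-step successor, with separate cases for disjoint mutual pairs and for pairs sharing a block. It then runs a Newman-style induction on the number of blocks.

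Although you announce that plan, you bypass local confluence altogether. Using only the quotient description (the paper's Lemma~\ref{lem:closed-form-mergers}(iii)), you show by induction that every grouping along one maximal sequence refines the endpoint $\mathcal Q_k$ of any other. The reason is that a merged pair $\{P,Q\}$ lying in distinct blocks $R\supseteq P$, $S\supseteq Q$ of $\mathcal Q_k$ would force $R\rightleftarrows_g S$, and hence a mutual pair in $\mathcal Q_k$. Mutual refinement then gives equality of the two endpoints.

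The paper's argument is the standard, modular rewriting proof. It also yields the sharper local fact that two divergent single mergers reconverge after one further merger each. Yours is shorter and needs no case split. It also gives a sequence-free description of the terminal minor: every reachable grouping refines $\mathcal Q^*$. Moreover, your induction uses only that no two blocks of the endpoint are two-sidedly attached. So $\mathcal Q^*$ is in fact the finest grouping of $I$ with no two-sidedly attached blocks, which is the intrinsic identification you allude to at the outset.
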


We call this common endpoint the \emph{terminal minor} of $(I,g)$.

\begin{corollary}\label{cor:greedy}
The following are equivalent:
\begin{enumerate}
    \item[\textnormal{(i)}] $(I,g)$ is pseudo-reciprocal.
    \item[\textnormal{(ii)}] Some maximal sequence of mergers from $(\mathcal I,g_{\mathcal I})$ terminates at the one-block grouping $\{I\}$.
    \item[\textnormal{(iii)}] Every maximal sequence of mergers from $(\mathcal I,g_{\mathcal I})$ terminates at $\{I\}$.
\end{enumerate}
\end{corollary}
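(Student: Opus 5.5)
The plan is to prove (ii)$\Rightarrow$(i), (iii)$\Rightarrow$(ii), and (i)$\Rightarrow$(iii). By Proposition~\ref{prop:confluence}, (ii) and (iii) are equivalent: all maximal sequences end at the same terminal minor, and a maximal sequence always exists. For (ii)$\Rightarrow$(i), note that the one-block network $(\{I\},g_{\{I\}})$ has $g_{\{I\}}(I,I)=0$. It is therefore trivially reciprocal, and since it is reached by finitely many mergers it is a reciprocal mutual minor of $(I,g)$.

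The substantive direction is (i)$\Rightarrow$(iii). Suppose $(\mathcal Q,g_{\mathcal Q})$ is a reciprocal mutual minor reached by some merger sequence.

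\textbf{Step 1: weak connectedness passes to minors.} By Assumption~\ref{ass:weak-connectedness}, $(\mathcal Q,g_{\mathcal Q})$ is weakly connected at the block level. A merger preserves the underlying undirected links between distinct blocks, since each block inherits the links of its members. Any undirected path between agents therefore projects to a walk among blocks in which consecutive blocks are either equal or linked. Presumably this is Lemma~\ref{lem:block-connectivity}.

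\textbf{Step 2: collapse to one block.} If $\mathcal Q$ has at least two blocks, weak connectedness gives some $P\neq Q$ with $\tilde g_{\mathcal Q}(P,Q)=1$. Reciprocity then makes $\{P,Q\}$ a mutual pair. Merging it yields a network that is again reciprocal: if $g(P\cup Q,R)=1$ with $R\neq P\cup Q$, then $g_{\mathcal Q}(P,R)=1$ or $g_{\mathcal Q}(Q,R)=1$, and reciprocity gives the reverse link $R\to P$ or $R\to Q$, hence $R\to P\cup Q$. The merged network is also still weakly connected by Step~1. By induction on the number of blocks, the sequence continues to the one-block grouping $\{I\}$. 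This presumably corresponds to Lemma~\ref{lem:pseudo-one-block}.

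\textbf{Step 3: conclude.} Concatenating the original sequence with these additional mergers gives a sequence that ends at $\{I\}$. This sequence is maximal, since a single block admits no mutual pair. Hence (ii) holds, and (iii) follows by Proposition~\ref{prop:confluence}.

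The only point requiring care is checking that reciprocity and weak connectedness are each preserved under merging. Both follow directly from the max-definition of $g_{\phi(\mathcal P)}$, so I expect no real obstacle beyond invoking confluence.
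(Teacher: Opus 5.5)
Your proposal is correct and follows the paper's proof. The paper uses the same three implications: (ii)$\Rightarrow$(i) because the one-block network is reciprocal, (iii)$\Rightarrow$(ii) because a maximal sequence exists, and (i)$\Rightarrow$(iii) by collapsing a reciprocal, weakly connected minor to $\{I\}$ and then invoking Proposition~\ref{prop:confluence}. Your Steps 1--2 reprove what the paper cites as Lemma~\ref{lem:pseudo-one-block}(ii); note that the weak connectedness of minors is argued inline there, not in Lemma~\ref{lem:block-connectivity}, which concerns directed paths within blocks.
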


Hence a network is pseudo-reciprocal if and only if its terminal minor is the
one-block grouping $\{I\}$. To decide pseudo-reciprocity, it suffices to merge
mutual pairs in any order until none remains; the procedure terminates within
$n-1$ steps.

\subsection{Consensus with a unique mutual pair}\label{sec:unique-mutual-pair}

By Corollary~\ref{cor:greedy}, a network is pseudo-reciprocal exactly when
the whole society can be assembled from individual agents by repeatedly
uniting two groups linked in both directions. Uniting two groups requires
only one channel in each direction. We now ask how few direct mutual pairs,
and how few links, consensus requires.

\begin{definition}\label{def:two-sided-group-attachment}
For disjoint non-empty groups $A,B\subseteq I$, define
$\bar g(A,B):=\max\{g(a,b):a\in A,\ b\in B\}$. We say that $A$ and $B$ are
\emph{two-sidedly attached}, written $A\rightleftarrows_g B$, if
$\bar g(A,B)=\bar g(B,A)=1$.
\end{definition}

A merged group inherits every link of its members
(Definition~\ref{def:merger}), so two groups can be merged exactly when they
are two-sidedly attached. Call a non-singleton block arising during the
assembly a \emph{core}. A new core arises only by uniting two singletons, and
two singletons can be merged only if they form a direct mutual pair. Every
core therefore starts from a direct mutual pair. Several cores may grow in
parallel and be joined later. But a network with a single direct mutual pair
can grow only one core, and every subsequent merger attaches one agent to it.
Consensus is ensured exactly when this core absorbs every other agent one at
a time.

\begin{definition}[Single-core order]\label{def:single-core-order}
A network $(I,g)$ \emph{admits a single-core order} if there is an ordering
$v_1,\ldots,v_n$ of $I$ such that, writing $M_k:=\{v_1,\ldots,v_k\}$,
\begin{equation}\label{eq:single-core-order}
    \mathscr M(I,g)=\{\{v_1,v_2\}\}
    \quad\text{and}\quad
    M_k\rightleftarrows_g \{v_{k+1}\}
    \quad\text{for every } k=2,\ldots,n-1.
\end{equation}
Such an ordering is a \emph{single-core order}, and the core
$M_2=\{v_1,v_2\}$ is its \emph{initial core}.
\end{definition}

Because the direct mutual pair is unique, the two channels that attach
$v_{k+1}$ to $M_k$ in \eqref{eq:single-core-order} cannot be the two
directions of a single dyad: otherwise $v_{k+1}$ and some incumbent in $M_k$
would form a second direct mutual pair.

\begin{corollary}\label{cor:single-core-characterization}
The following are equivalent for a network $(I,g)$:
\begin{enumerate}
    \item[\textnormal{(i)}] $(I,g)$ has exactly one direct mutual pair and ensures consensus.
    \item[\textnormal{(ii)}] $(I,g)$ has exactly one direct mutual pair and is pseudo-reciprocal.
    \item[\textnormal{(iii)}] $(I,g)$ admits a single-core order.
\end{enumerate}
\end{corollary}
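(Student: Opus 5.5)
The equivalence (i)$\Leftrightarrow$(ii) follows immediately from Theorem~\ref{thm:pseudo-reciprocity}, since both conditions share the clause ``exactly one direct mutual pair.'' The work lies in (ii)$\Leftrightarrow$(iii). In both directions I will use a simple invariant of the grouped networks produced by mergers: the link between two blocks $P,Q$ satisfies $g_{\mathcal P}(P,Q)=\bar g(P,Q)$. This holds by induction on the number of mergers, because Definition~\ref{def:merger} takes the maximum over constituents. Consequently $\{P,Q\}$ is a mutual pair of a grouped network obtained from $(\mathcal I,g_{\mathcal I})$ if and only if $P\rightleftarrows_g Q$.

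For (iii)$\Rightarrow$(ii), let $v_1,\ldots,v_n$ be a single-core order. The first condition in \eqref{eq:single-core-order} gives exactly one direct mutual pair. For pseudo-reciprocity I exhibit the merger sequence directly. First merge $\{v_1\}$ and $\{v_2\}$, which is allowed because they form a direct mutual pair. Then, for $k=2,\ldots,n-1$, merge the block $M_k$ with $\{v_{k+1}\}$. By the invariant, this merger is allowed precisely because $M_k\rightleftarrows_g\{v_{k+1}\}$. The sequence ends at the one-block grouping $\{I\}$, which is trivially reciprocal, so $(I,g)$ is pseudo-reciprocal.

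For (ii)$\Rightarrow$(iii), let $\{v_1,v_2\}$ be the unique direct mutual pair. By Corollary~\ref{cor:greedy}, every maximal sequence of mergers terminates at $\{I\}$; fix one such sequence. The claim I will prove by induction along the sequence is that every grouping in it has at most one non-singleton block (the core), and that the core, when it exists, contains $\{v_1,v_2\}$.
\begin{itemize}
\item A merger of two singletons $\{a\},\{b\}$ requires $\{a,b\}$ to be a direct mutual pair, so $\{a,b\}=\{v_1,v_2\}$.
\item Once the core exists, $v_1$ and $v_2$ are no longer singletons, so no second pair of singletons can ever be merged.
\item Hence every later merger involves the core. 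Since there is only one non-singleton block, the other participant must be a singleton.
\end{itemize}
The sequence must begin with the merger of $\{v_1\}$ and $\{v_2\}$, because no other merger is available at the start. Every subsequent step then merges the core $M_k$ with a singleton $\{v_{k+1}\}$, which yields an ordering $v_3,\ldots,v_n$. Since the sequence reaches $\{I\}$, all $n-1$ mergers occur and every agent is absorbed. Each of these mergers was legal, so the invariant gives $M_k\rightleftarrows_g\{v_{k+1}\}$ for all $k=2,\ldots,n-1$, which is \eqref{eq:single-core-order}.

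I expect the only delicate point to be the invariant identifying grouped links with $\bar g$. It needs a short induction: after merging $P$ and $Q$, one has $g(P\cup Q,R)=\max\{g(P,R),g(Q,R)\}=\bar g(P\cup Q,R)$, and similarly for incoming links. Corollary~\ref{cor:greedy} is used only to guarantee that some merger sequence from $(\mathcal I,g_{\mathcal I})$ reaches $\{I\}$. Pseudo-reciprocity alone gives a reciprocal minor that need not be $\{I\}$ unless one appeals to that corollary, so citing it at this step is essential.
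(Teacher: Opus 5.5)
Your proposal is correct and matches the paper's proof step for step. The shared ingredients are (i)$\Leftrightarrow$(ii) via Theorem~\ref{thm:pseudo-reciprocity}, the identification of grouped links with $\bar g$ (the paper's Lemma~\ref{lem:closed-form-mergers}(iii)), the explicit core-growing merger sequence for (iii)$\Rightarrow$(ii), and the uniqueness argument forcing a single core for (ii)$\Rightarrow$(iii). The only difference is cosmetic: you get a merger sequence reaching $\{I\}$ from Corollary~\ref{cor:greedy}, while the paper takes it directly from Assumption~\ref{ass:weak-connectedness} and Lemma~\ref{lem:pseudo-one-block}(ii), which is exactly what that corollary's proof uses.
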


Corollary~\ref{cor:single-core-characterization} isolates the minimal
reciprocity that consensus requires: a single reciprocal dyad, the initial
core, suffices for a population of any size, with every other link running
one way.

\begin{figure}[tb]
  \centering
  \captionsetup{font=small}
  \begin{subfigure}[b]{0.32\textwidth}
    \centering
    \begin{tikzpicture}[basic, scale=1.1, transform shape]
      \path[use as bounding box] (-2.55,-2.25) rectangle (2.55,2.25);
      \node (ni) at (-1.1,0)      {$i$};
      \node (nj) at (1.1,0)       {$j$};
      \node (nk) at (0,-1.35)     {$k$};
      \node (nl) at (0,1.35)      {$\ell$};
      \node (nm) at (-2.10,1.35)  {$m$};
      \node (nr) at (2.10,-1.35)  {$r$};
      \draw[<->] (ni) -- (nj);
    \end{tikzpicture}
    \caption{Connect the core}
    \label{fig:rc-construct-core}
  \end{subfigure}%
  \hfill
  \begin{subfigure}[b]{0.32\textwidth}
    \centering
    \begin{tikzpicture}[basic, scale=1.1, transform shape]
      \path[use as bounding box] (-2.55,-2.25) rectangle (2.55,2.25);
      \node (ni) at (-1.1,0)      {$i$};
      \node (nj) at (1.1,0)       {$j$};
      \node (nk) at (0,-1.35)     {$k$};
      \node (nl) at (0,1.35)      {$\ell$};
      \node (nm) at (-2.10,1.35)  {$m$};
      \node (nr) at (2.10,-1.35)  {$r$};
      \draw[<->] (ni) -- (nj);
      \draw[->]  (ni) -- (nk);
      \draw[->]  (nk) -- (nj);
      \draw[->]  (nj) -- (nl);
      \draw[->]  (nl) -- (ni);
    \end{tikzpicture}
    \caption{Attach $k$ and $\ell$}
    \label{fig:rc-construct-kl}
  \end{subfigure}%
  \hfill
  \begin{subfigure}[b]{0.32\textwidth}
    \centering
    \begin{tikzpicture}[basic, scale=1.1, transform shape]
      \path[use as bounding box] (-2.55,-2.25) rectangle (2.55,2.25);
      \node (ni) at (-1.1,0)      {$i$};
      \node (nj) at (1.1,0)       {$j$};
      \node (nk) at (0,-1.35)     {$k$};
      \node (nl) at (0,1.35)      {$\ell$};
      \node (nm) at (-2.10,1.35)  {$m$};
      \node (nr) at (2.10,-1.35)  {$r$};
      \draw[<->] (ni) -- (nj);
      \draw[->]  (ni) -- (nk);
      \draw[->]  (nk) -- (nj);
      \draw[->]  (nj) -- (nl);
      \draw[->]  (nl) -- (ni);
      \draw[->]  (ni) -- (nm);
      \draw[->]  (nm) -- (nl);
      \draw[->]  (nj) -- (nr);
      \draw[->]  (nr) -- (nk);
    \end{tikzpicture}
    \caption{Attach $m$ and $r$}
    \label{fig:rc-construct-final}
  \end{subfigure}

  \vspace{0.7\baselineskip}

  \begin{subfigure}[b]{0.42\textwidth}
    \centering
    \begin{tikzpicture}[basic, scale=1.1, transform shape]
      \path[use as bounding box] (-2.55,-2.25) rectangle (2.55,2.25);
      \node (gij) at (0,0)         {$\{i,j\}$};
      \node (bk) at (-0.10,-1.85) {$k$};
      \node (bl) at (0.30,2.20)   {$\ell$};
      \node (bm) at (-1.80,1.20)  {$m$};
      \node (br) at (2.25,-1.45)  {$r$};
      \draw[<->] (gij) -- (bl);
      \draw[<->] (gij) -- (bk);
      \draw[->]  (gij) -- (bm);
      \draw[->]  (gij) -- (br);
      \draw[->]  (bm) -- (bl);
      \draw[->]  (br) -- (bk);
    \end{tikzpicture}
    \caption{Merge $\{i,j\}$}
    \label{fig:rc-minor-first}
  \end{subfigure}
  \hspace{0.001\textwidth}
  \begin{subfigure}[b]{0.42\textwidth}
    \centering
    \begin{tikzpicture}[basic, scale=1.1, transform shape]
      \path[use as bounding box] (-2.80,-2.25) rectangle (2.80,2.25);
      \node (cm) at (-2.45,0)  {$m$};
      \node (gg) at (0,0)      {$\{i,j,k,\ell\}$};
      \node (cr) at (2.45,0)   {$r$};
      \draw[<->] (cm) -- (gg);
      \draw[<->] (gg) -- (cr);
    \end{tikzpicture}
    \caption{A reciprocal minor}
    \label{fig:rc-minor-final}
  \end{subfigure}
  \caption{Building a pseudo-reciprocal network from the single mutual pair $\{i,j\}$. Panels (a)--(c) attach each remaining agent through one channel in and one out; Panels (d)--(e) contract the core to a reciprocal minor.}
  \label{fig:reciprocal-core}
\end{figure}

Figure~\ref{fig:reciprocal-core} illustrates
Corollary~\ref{cor:single-core-characterization} with six agents. The pair
$\{i,j\}$ is the only direct mutual pair, and
Panels~(\subref{fig:rc-construct-core})--(\subref{fig:rc-construct-final})
build the network along a single-core order: each remaining agent is
attached to the growing core through one incoming and one outgoing channel,
as \eqref{eq:single-core-order} requires.
Panels~(\subref{fig:rc-minor-first})--(\subref{fig:rc-minor-final}) then
display the merger sequence that witnesses pseudo-reciprocity: merging
$\{i,j\}$ and absorbing $k$ and $\ell$ turns the one-way links of $m$ and
$r$ into mutual ones, leaving a reciprocal minor.

Reciprocity can thus be reduced to a single dyad; the following proposition
shows that the total number of links cannot be reduced below $2(n-1)$.

\begin{proposition}\label{prop:link-count}
If a network ensures consensus, then $\sum_{i,j\in I}g(i,j)\geq 2(n-1)$.
For every $n\geq 2$, equality holds for some network with $n$ agents that
admits a single-core order.
\end{proposition}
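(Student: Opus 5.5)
By Theorem~\ref{thm:pseudo-reciprocity}, a network ensures consensus exactly when it is pseudo-reciprocal. By Corollary~\ref{cor:greedy}, this holds exactly when some sequence of mergers from $(\mathcal I,g_{\mathcal I})$ ends at the one-block grouping $\{I\}$. The lower bound therefore reduces to a counting argument along such a merger sequence, and the equality case reduces to an explicit construction.

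\textbf{Lower bound.} Fix a sequence of $n-1$ mergers $\mathcal P_0=\mathcal I,\mathcal P_1,\ldots,\mathcal P_{n-1}=\{I\}$. At step $t$ we merge a mutual pair $\{P_t,Q_t\}$ of $(\mathcal P_{t-1},g_{t-1})$. By Definition~\ref{def:merger}, $g_{t-1}(P_t,Q_t)=1$ means there is an original link $g(a,b)=1$ with $a\in P_t$ and $b\in Q_t$; similarly there is a link $g(c,d)=1$ with $c\in Q_t$ and $d\in P_t$.

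Assign to step $t$ these two witnessing links. I claim they are distinct, and distinct from the links assigned at every other step. Two links witnessing the same step run in opposite directions between the disjoint sets $P_t$ and $Q_t$, so they differ. For links assigned at different steps, note that each witnessing link of step $t$ joins two agents that lie in different blocks of $\mathcal P_{t-1}$ and in the same block of $\mathcal P_t$. Blocks only coarsen along the sequence, so the first index at which the two endpoints of a link share a block is uniquely determined by the link. Hence the link identifies the step $t$ to which it was assigned. This yields an injection from a set of $2(n-1)$ step–direction labels into the set of links, giving $\sum_{i,j}g(i,j)\ge 2(n-1)$.

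\textbf{Equality.} For $n=2$, take the single mutual pair $1\leftrightarrow 2$, which has $2$ links. For $n\ge 3$, take $g(1,2)=g(2,1)=1$ and, for each $k\ge 3$, add the links $1\to k$ and $k\to 2$; all other links are absent. This gives $2+2(n-2)=2(n-1)$ links.

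It remains to check Definition~\ref{def:single-core-order} with the order $v_k=k$. The only direct mutual pair is $\{1,2\}$: each agent $k\ge 3$ has exactly one in-link (from $1$) and one out-link (to $2$), so $k$ cannot be mutually linked with anyone. For the attachment condition, $M_k\supseteq\{1,2\}$ sends to $k+1$ through $1$ and receives from $k+1$ through $2$, so $M_k\rightleftarrows_g\{k+1\}$. Hence this network admits a single-core order. By Corollary~\ref{cor:single-core-characterization} it is pseudo-reciprocal, and by Theorem~\ref{thm:pseudo-reciprocity} it ensures consensus.

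\textbf{Main obstacle.} The delicate step is the injectivity of the link assignment in the lower bound. The argument via the first step at which a link's endpoints share a block settles it.
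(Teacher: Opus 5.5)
Your proposal is correct and follows the paper's proof. The lower bound is the same argument: you assign two oppositely directed witnessing links to each of the $n-1$ mergers (strictly via Lemma~\ref{lem:closed-form-mergers}(iii) rather than Definition~\ref{def:merger} alone), and the first stage at which a link becomes internal identifies its step. For equality you use a slightly different extremal network, in which every $k\ge 3$ receives from $1$ and sends to $2$ instead of the paper's chain $v_1\to v_{k+1}\to v_k$; it is verified in exactly the same way and is equally valid.
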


\begin{remark}\label{rem:link-count}
Strong connectedness alone requires only $n$ links, attained by a directed
cycle; for $n\geq 3$, the cycle has no mutual pair and hence does not ensure
consensus (Theorem~\ref{thm:pseudo-reciprocity} and
Corollary~\ref{cor:greedy}). Under pairwise reciprocity, weak connectedness
already forces at least $n-1$ mutual dyads and hence at least $2(n-1)$ links.
Pseudo-reciprocity therefore leaves the minimal number of links unchanged;
what it reduces is the number of mutual dyads required, from $n-1$ to one.
\end{remark}

\subsection{Local consensus}\label{sec:local}

Even without Assumption~\ref{ass:weak-connectedness}, the same analysis
characterizes consensus in a local form: we determine between which
agents consensus is ensured, and the blocks of the terminal minor emerge
as the maximal groups within which it is.

For a non-empty $J\subseteq I$, we write $g_J$ for the restriction of $g$
to $J\times J$ and call the pair $(J,g_J)$ the subnetwork \emph{induced}
by $J$. Weak connectedness and pseudo-reciprocity apply to $(J,g_J)$ as to
any network; both hold trivially when $|J|=1$. For $i,j\in I$, we say
that $(I,g)$ \emph{ensures consensus between $i$ and $j$} if
$f_i^\Pi=f_j^\Pi$ for every decision environment
$\mathscr D=(\Omega,D,f)$ and every $\Pi\in F_{\mathscr D}(I,g)$.

\begin{theorem}\label{thm:decomposition}
Let $i,j\in I$. The following are equivalent:
\begin{enumerate}
    \item[\textnormal{(i)}] The network ensures consensus between $i$ and $j$.
    \item[\textnormal{(ii)}] Some $J\subseteq I$ containing $i$ and $j$ induces a weakly connected, pseudo-reciprocal subnetwork.
    \item[\textnormal{(iii)}] $i$ and $j$ belong to the same block of the terminal minor.
\end{enumerate}
Moreover, the blocks of the terminal minor are precisely the maximal sets
inducing weakly connected, pseudo-reciprocal subnetworks.
\end{theorem}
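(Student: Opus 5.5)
The plan is to prove (iii)$\Rightarrow$(ii)$\Rightarrow$(i)$\Rightarrow$(iii), and to read off the ``moreover'' clause from the same arguments.

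\emph{(iii)$\Rightarrow$(ii).} Let $B$ be the block of the terminal minor that contains $i$ and $j$. Every merger that builds $B$ uses links whose endpoints lie inside $B$. So the mergers in a maximal sequence that involve subsets of $B$, taken in the same order, form a legal merger sequence for the induced subnetwork $(B,g_B)$, and this sequence ends at $\{B\}$. The one-block grouping is trivially reciprocal, so $(B,g_B)$ is pseudo-reciprocal. It is also weakly connected: each merger joins two blocks through a link, so by induction every block is weakly connected in its induced subnetwork.

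\emph{(ii)$\Rightarrow$(i).} Fix $\Pi\in F_{\mathscr D}(I,g)$. Each agent $k\in J$ satisfies $\Pi_k=\Pi_k\vee\bigvee_{l\in S(k)}W(\Pi_l)$. Dropping the senders outside $J$ only coarsens the right-hand side, so it does not by itself give a fixed point of $(J,g_J)$. I would avoid this problem by checking how the sufficiency half of Theorem~\ref{thm:pseudo-reciprocity} is proved. That argument should use only the inequalities $W(\Pi_l)\le\Pi_k$ for $l\in S(k)$, $k,l\in J$. These are exactly the comparisons needed to merge mutual pairs: they show that linked agents in a mutual pair have equal working partitions, and this uses STP through the fact that $W(\Pi_l)\le\Pi_k$ implies that $f_k$ is measurable with respect to $W(\Pi_l)$. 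I would therefore restate the sufficiency proof as a lemma: if a profile satisfies $W(\Pi_l)\le \Pi_k$ whenever $g(l,k)=1$ with $k,l\in J$, and $(J,g_J)$ is weakly connected and pseudo-reciprocal, then $f_k^\Pi$ is constant across $k\in J$. Extra information from senders outside $J$ only makes the partitions finer, so the hypothesis holds at any fixed point of the whole network. This monotonicity point is the main obstacle, and I expect the paper's lemmas (e.g.\ Lemma~\ref{lem:block-connectivity}) to be stated in such a relational form.

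\emph{(i)$\Rightarrow$(iii).} Suppose $i$ and $j$ lie in different blocks of the terminal minor. By Proposition~\ref{prop:confluence} this minor is well defined. It has no mutual pair, so between any two of its blocks the links run in at most one direction. I would adapt the finite counterexample of Section~\ref{sec:proofs-separation} to the terminal-minor graph. Either use it directly at block level, or, in the simplest case, let agents in a block with no incoming links from other blocks keep distinct information so that their decisions differ from those downstream. The goal is to build a profile in which all agents in the same block share a partition and make identical decisions, and the block partitions satisfy the fixed-point condition along one-way links. The natural device is to make downstream blocks finer while giving them the same decisions as upstream blocks on the relevant events, except that decisions between $i$'s block and $j$'s block differ somewhere. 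I expect to reuse the necessity construction, applied to the quotient rather than to $\{I\}$.

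\emph{Moreover clause.} Each block satisfies (ii) by the argument for (iii)$\Rightarrow$(ii). If a set $J$ satisfies (ii), then every pair of agents in $J$ satisfies (i), hence (iii), so $J$ lies inside a single block. Therefore the blocks are exactly the maximal such sets.
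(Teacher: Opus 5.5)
Your proposal is correct in outline, and two of its links match the paper's proof. (iii)$\Rightarrow$(ii) is Lemma~\ref{lem:blocks-pseudo-reciprocal}(i). Your primary plan for (i)$\Rightarrow$(iii) is exactly Corollary~\ref{cor:lift}: apply the separating environment of Lemma~\ref{lem:separation} to the terminal minor, which has at least two blocks and no mutual pair, and give each agent her block's partition. In that lift, within-block links are handled by $W(\Pi_u)\leq\Pi_u=\Pi_v$, and cross-block links by $g_{\mathcal Q^*}(R_u,R_v)=\bar g(R_u,R_v)\geq g(u,v)$.

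The genuine difference is how you get out of (ii). The paper proves (ii)$\Rightarrow$(iii) combinatorially in Lemma~\ref{lem:blocks-pseudo-reciprocal}(ii). It embeds a merger sequence of $(J,g_J)$ into $(\mathcal I,g_{\mathcal I})$, extends it to a maximal one, and invokes Proposition~\ref{prop:confluence}. It then proves (iii)$\Rightarrow$(i) by Lemma~\ref{lem:block-consensus} applied to $\mathcal Q^*$, and the ``moreover'' clause also rests on the combinatorial lemma. You instead go (ii)$\Rightarrow$(i) directly, running the block-consensus argument inside $J$. You then get the containment of $J$ in a block semantically, by applying (i)$\Rightarrow$(iii) to each pair in $J$.

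Your route buys two things. It skips the embedding argument and shows that consensus within $J$ is enforced by $J$'s internal links alone. Also, your cycle works verbatim for the endpoint of any maximal sequence, so it never needs Proposition~\ref{prop:confluence}; it even re-derives it, since the blocks of every maximal endpoint are the classes of the consensus relation. The paper's route, in exchange, keeps the purely graph-theoretic fact (weakly connected pseudo-reciprocal sets lie inside terminal blocks) independent of the decision-theoretic construction in Lemma~\ref{lem:separation}.

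A few details in your write-up need correcting.

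\begin{itemize}
\item The ``main obstacle'' you flag is not one. By Lemma~\ref{lem:fixed-point-characterization}, the fixed-point condition is the conjunction of the link-wise inequalities $W(\Pi_l)\leq\Pi_k$. So the restriction of any $\Pi\in F_{\mathscr D}(I,g)$ to $J$ already lies in $F_{\mathscr D}(J,g_J)$: the coarser join is sandwiched, $\Pi_k\leq\Pi_k\vee\bigvee_{l\in S(k)\cap J}W(\Pi_l)\leq\Pi_k\vee\bigvee_{l\in S(k)}W(\Pi_l)=\Pi_k$. Hence Lemma~\ref{lem:block-consensus} applies verbatim to $(J,g_J)$, together with Lemma~\ref{lem:pseudo-one-block}(ii). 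The relevant lemma is this one, not Lemma~\ref{lem:block-connectivity}, which concerns paths.
\item $W(\Pi_l)\leq\Pi_k$ says that $f_l^\Pi$ is constant on $\Pi_k$-cells. It does not say that $f_k$ is $W(\Pi_l)$-measurable.
\item Your fallback ideas for (i)$\Rightarrow$(iii) do not cover the general case. A terminal minor need not have a source block; a directed $3$-cycle is its own terminal minor. Rely on Lemma~\ref{lem:separation} instead: it separates all blocks at once, and the absence of mutual pairs is used precisely to make its $f$ well defined.
\end{itemize}
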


Condition (ii) states that some group containing both agents would, as a
network in its own right, ensure consensus by
Theorem~\ref{thm:pseudo-reciprocity}. Condition
(iii) makes the criterion operational, since the
terminal minor is computed by merging mutual pairs in arbitrary order
(Proposition~\ref{prop:confluence}). The theorem is informative under
Assumption~\ref{ass:weak-connectedness} as well: when pseudo-reciprocity
fails, it identifies the pairs between which consensus is still ensured.

\begin{example}\label{ex:local-consensus}
Let $I=\{1,2,3,4\}$ with links $1\leftrightarrow 2$, $2\to 3$, and
$3\leftrightarrow 4$ only. Merging the two dyads yields the terminal minor:
the blocks $\{1,2\}$ and $\{3,4\}$, joined by a single link from the former
to the latter. By Theorem~\ref{thm:decomposition}, the network ensures
consensus within each dyad but not across them---even though the network is
weakly connected and information flows from $\{1,2\}$ to $\{3,4\}$. The
dyads are the maximal sets inducing weakly connected, pseudo-reciprocal
subnetworks.
\end{example}

Deleting the link $2\to 3$ makes the network pseudo-reciprocal---merging
the two dyads now leaves two blocks with no link between them, a vacuously
reciprocal minor---yet consensus between the dyads still fails. Weak
connectedness in condition (ii) therefore cannot be dropped.

Finally, the necessity claim of Section~\ref{sec:preliminaries} follows:
if $(I,g)$ is not weakly connected, there exist agents $i$ and $j$
with no path between them in the underlying undirected relation; no $J$
containing both then induces a weakly connected subnetwork, so, by
Theorem~\ref{thm:decomposition}, the network does not ensure consensus
between $i$ and $j$, and hence does not ensure consensus.

\FloatBarrier

\section{Proofs}\label{sec:proofs}

This section proves the results of Section~\ref{sec:results}. The
sufficiency part of Theorem~\ref{thm:pseudo-reciprocity} follows from
Lemmas~\ref{lem:pseudo-one-block} and \ref{lem:block-consensus}, and the
necessity part from Corollary~\ref{cor:lift}. The proof of
Theorem~\ref{thm:decomposition} uses
Lemmas~\ref{lem:blocks-pseudo-reciprocal} and \ref{lem:block-consensus}
together with Corollary~\ref{cor:lift}. The remaining results are derived
from Theorem~\ref{thm:pseudo-reciprocity} and the analysis of mergers in
Section~\ref{sec:proofs-mergers}.

\subsection{Mergers}\label{sec:proofs-mergers}

\begin{lemma}\label{lem:closed-form-mergers}
Let $(\mathcal P_m,g_{\mathcal P_m})$ be obtained from a grouped network
$(\mathcal P_0,g_{\mathcal P_0})$ by a sequence of $m$ mergers, and let
$\Phi_m:\mathcal P_0\to\mathcal P_m$ be the induced quotient map.
\begin{enumerate}
    \item[\textnormal{(i)}] For all distinct $X,Y\in\mathcal P_m$,
    \begin{equation}\label{eq:closed-form-results}
        g_{\mathcal P_m}(X,Y)
        =\max\{g_{\mathcal P_0}(A,B):\Phi_m(A)=X,\ \Phi_m(B)=Y\}.
    \end{equation}
    \item[\textnormal{(ii)}] If $(\mathcal P_0,g_{\mathcal P_0})$ is reciprocal,
    then so is $(\mathcal P_m,g_{\mathcal P_m})$.
    \item[\textnormal{(iii)}] Suppose $\mathcal P_0=\mathcal I$. Then
    $g_{\mathcal P_m}(X,Y)=\bar g(X,Y)$ for all distinct
    $X,Y\in\mathcal P_m$. Hence $(\mathcal P_m,g_{\mathcal P_m})$ is determined
    by its grouping, and $\{X,Y\}$ is a mutual pair if and only if
    $X\rightleftarrows_g Y$.
\end{enumerate}
\end{lemma}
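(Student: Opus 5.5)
We prove (i) by induction on the number $m$ of mergers, and then derive (ii) and (iii) as consequences of the closed form \eqref{eq:closed-form-results}.

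\emph{Part (i).} For $m=0$ the map $\Phi_0$ is the identity. The only pair $(A,B)$ with $\Phi_0(A)=X$ and $\Phi_0(B)=Y$ is $(X,Y)$ itself, so \eqref{eq:closed-form-results} holds trivially.

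For the inductive step, suppose \eqref{eq:closed-form-results} holds at stage $m$. Let the $(m+1)$-st merger have quotient map $\phi:\mathcal P_m\to\mathcal P_{m+1}$, so that $\Phi_{m+1}=\phi\circ\Phi_m$. Fix distinct $X,Y\in\mathcal P_{m+1}$. Definition~\ref{def:merger} gives
\[
g_{\mathcal P_{m+1}}(X,Y)=\max\{g_{\mathcal P_m}(R,S):\phi(R)=X,\ \phi(S)=Y\}.
\]
Since $X\neq Y$, every such pair has $R\neq S$. The induction hypothesis therefore applies to each term, and each $g_{\mathcal P_m}(R,S)$ is itself a maximum over pairs $(A,B)$ with $\Phi_m(A)=R$ and $\Phi_m(B)=S$. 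A maximum of maxima is the maximum over the union of the index sets. That union is exactly $\{(A,B):\phi(\Phi_m(A))=X,\ \phi(\Phi_m(B))=Y\}$, which gives \eqref{eq:closed-form-results} at stage $m+1$. This index-set bookkeeping, and in particular checking that $R\neq S$ so the hypothesis applies, is the only delicate point; everything else is routine.

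\emph{Part (ii).} The right-hand side of \eqref{eq:closed-form-results} for $(Y,X)$ ranges over the transposed pairs $(B,A)$, where $\Phi_m(B)=Y$ and $\Phi_m(A)=X$. Every such pair has $A\neq B$, because they map to the distinct blocks $X$ and $Y$. If $g_{\mathcal P_0}$ is reciprocal, each term satisfies $g_{\mathcal P_0}(B,A)=g_{\mathcal P_0}(A,B)$, so the two maxima coincide and $g_{\mathcal P_m}(X,Y)=g_{\mathcal P_m}(Y,X)$. For $X=Y$ both sides are $0$ by definition, so $(\mathcal P_m,g_{\mathcal P_m})$ is reciprocal.

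\emph{Part (iii).} When $\mathcal P_0=\mathcal I$, the blocks of $\mathcal P_0$ are singletons $\{a\}$, and $\Phi_m(\{a\})$ is the block of $\mathcal P_m$ containing $a$. Hence the set of $A$ with $\Phi_m(A)=X$ is exactly $\{\{a\}:a\in X\}$, and similarly for $Y$. Substituting into \eqref{eq:closed-form-results} with $g_{\mathcal I}(\{a\},\{b\})=g(a,b)$ gives
\[
g_{\mathcal P_m}(X,Y)=\max\{g(a,b):a\in X,\ b\in Y\}=\bar g(X,Y).
\]
Since $\bar g$ depends only on the sets $X$ and $Y$, the grouped network is determined by its grouping. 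Finally, $\{X,Y\}$ is a mutual pair precisely when $\bar g(X,Y)=\bar g(Y,X)=1$, which is $X\rightleftarrows_g Y$ by Definition~\ref{def:two-sided-group-attachment}.
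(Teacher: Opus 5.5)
Your proposal is correct and follows the paper's proof essentially step for step. For (i) you induct on $m$, use the fact that $\phi(R)\neq\phi(S)$ forces $R\neq S$ so the hypothesis applies, and note that $\Phi_{m+1}=\phi\circ\Phi_m$ merely reindexes the nested maximum; (ii) is the same transposition argument, and (iii) is the same substitution. The paper justifies $\Phi_m(\{a\})$ being the block containing $a$ by noting that each merger map sends a block to a block containing it, which you may want to state explicitly.
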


\begin{proof}
Let
$(\mathcal P_0,g_{\mathcal P_0})\to(\mathcal P_1,g_{\mathcal P_1})\to\cdots\to(\mathcal P_m,g_{\mathcal P_m})$
be a merger sequence. Let $\phi_t:\mathcal P_{t-1}\to\mathcal P_t$ be the
quotient map at step $t$, and let $\Phi_t:=\phi_t\circ\cdots\circ\phi_1$.

\smallskip
\noindent(i) We induct on $m$. The case $m=0$ is immediate. Suppose the
formula holds after $m-1$ mergers. For distinct $X,Y\in\mathcal P_m$,
Definition~\ref{def:merger} gives
\begin{align*}
    g_{\mathcal P_m}(X,Y)
    &=\max_{\substack{\phi_m(X^{\prime})=X\\ \phi_m(Y^{\prime})=Y}}
        g_{\mathcal P_{m-1}}(X^{\prime},Y^{\prime})  \\
    &=\max_{\substack{\phi_m(X^{\prime})=X\\ \phi_m(Y^{\prime})=Y}}
        \max_{\substack{\Phi_{m-1}(A)=X^{\prime}\\ \Phi_{m-1}(B)=Y^{\prime}}}
        g_{\mathcal P_0}(A,B)  \\
    &=\max_{\substack{\Phi_m(A)=X\\ \Phi_m(B)=Y}}
        g_{\mathcal P_0}(A,B).
\end{align*}
The second equality is the induction hypothesis, which applies because
$X\neq Y$ forces $X^{\prime}\neq Y^{\prime}$. The last equality merely
rewrites the indexing set, since $\Phi_m=\phi_m\circ\Phi_{m-1}$.

\smallskip
\noindent(ii) Suppose $(\mathcal P_0,g_{\mathcal P_0})$ is reciprocal. For
distinct $X,Y\in\mathcal P_m$, part (i) gives
\begin{align*}
    g_{\mathcal P_m}(X,Y)
    &=\max_{\Phi_m(A)=X,\ \Phi_m(B)=Y}g_{\mathcal P_0}(A,B) \\
    &=\max_{\Phi_m(A)=X,\ \Phi_m(B)=Y}g_{\mathcal P_0}(B,A) \\
    &=\max_{\Phi_m(A)=Y,\ \Phi_m(B)=X}g_{\mathcal P_0}(A,B)
      =g_{\mathcal P_m}(Y,X).
\end{align*}
The case $X=Y$ follows from the no-self-loop convention.

\smallskip
\noindent(iii) Let $\mathcal P_0=\mathcal I$. Each quotient map sends a block
to the block containing it, so $\Phi_m(\{a\})$ is the block of
$\mathcal P_m$ containing $a$. Hence \eqref{eq:closed-form-results} reads
$g_{\mathcal P_m}(X,Y)=\max\{g(a,b):a\in X,\ b\in Y\}=\bar g(X,Y)$ for
distinct $X,Y\in\mathcal P_m$. The link function is thus determined by the
grouping, and $\{X,Y\}$ is a mutual pair if and only if
$\bar g(X,Y)=\bar g(Y,X)=1$, that is, $X\rightleftarrows_g Y$.
\end{proof}

The next lemma underlies Remark~\ref{rem:strong-connectedness} and is used
in the proof of Lemma~\ref{lem:blocks-pseudo-reciprocal}.

\begin{lemma}\label{lem:block-connectivity}
Let $(\mathcal P,g_{\mathcal P})$ be obtained from $(\mathcal I,g_{\mathcal I})$
by a sequence of mergers. Then, for every block $P\in\mathcal P$ and all
distinct $a,b\in P$, there exists a path $i_0,\ldots,i_K\in P$ such that
$i_0=a$, $i_K=b$, and $g(i_{k-1},i_k)=1$ for all $k\in\{1,\ldots,K\}$.
\end{lemma}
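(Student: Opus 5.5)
We argue by induction on the number $m$ of mergers that produce $(\mathcal P,g_{\mathcal P})$ from $(\mathcal I,g_{\mathcal I})$. We prove the stronger invariant that every block of $\mathcal P_m$ induces a strongly connected subnetwork of $(I,g)$, with paths staying inside the block. For $m=0$ every block is a singleton, so the claim holds vacuously: there are no distinct $a,b$ in a block.

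For the inductive step, suppose the invariant holds for $\mathcal P_{m-1}$, and let the $m$-th merger unite a mutual pair $\{P,Q\}$ of $(\mathcal P_{m-1},g_{\mathcal P_{m-1}})$. Every block of $\mathcal P_m$ other than $P\cup Q$ is already a block of $\mathcal P_{m-1}$, so only $P\cup Q$ needs attention.

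By Lemma~\ref{lem:closed-form-mergers}(iii), $\{P,Q\}$ being a mutual pair means $P\rightleftarrows_g Q$. Hence there exist $p\in P$, $q\in Q$ with $g(p,q)=1$, and $q'\in Q$, $p'\in P$ with $g(q',p')=1$. Take distinct $a,b\in P\cup Q$.

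\begin{enumerate}
\item[(1)] If $a,b$ lie in the same old block, the induction hypothesis gives a path inside that block, hence inside $P\cup Q$.
\item[(2)] If $a\in P$ and $b\in Q$, concatenate three pieces: a path from $a$ to $p$ inside $P$ (empty if $a=p$), the link $p\to q$, and a path from $q$ to $b$ inside $Q$ (empty if $q=b$).
\item[(3)] If $a\in Q$ and $b\in P$, do the same using $a\to q'$ inside $Q$, the link $q'\to p'$, and $p'\to b$ inside $P$.
\end{enumerate}

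In every case all vertices lie in $P\cup Q$. Concatenation may repeat vertices, but the statement only asks for a walk $i_0,\ldots,i_K$ with consecutive links, so repetition is harmless.

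No step is a real obstacle. The only point needing care is translating the grouped-network mutual pair into concrete agent-level links, which Lemma~\ref{lem:closed-form-mergers}(iii) supplies directly. We should also handle the degenerate cases where an endpoint coincides with $p$, $q$, $p'$ or $q'$ by allowing empty subpaths.
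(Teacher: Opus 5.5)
Your proposal is correct and follows essentially the same route as the paper: induction on the number of mergers, checking only the new block $P\cup Q$, using Lemma~\ref{lem:closed-form-mergers}(iii) to extract agent-level links $p\to q$ and $q'\to p'$, and concatenating within-block paths (possibly empty) with the crossing link. The ``stronger invariant'' you mention is just a restatement of the lemma itself, but this does no harm.
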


\begin{proof}
We induct on the length of the merger sequence. Singleton blocks satisfy the
claim vacuously. Suppose the claim holds at $\mathcal P$, and let
$\mathcal P^{\prime}$ be obtained from $\mathcal P$ by merging a mutual pair
$\{P,Q\}$. Only the new block $P\cup Q$ needs to be checked.

Since $g_{\mathcal P}(P,Q)=g_{\mathcal P}(Q,P)=1$,
Lemma~\ref{lem:closed-form-mergers}(iii) yields $a_1\in P$ and $b_1\in Q$ with
$g(a_1,b_1)=1$, and $a_2\in Q$ and $b_2\in P$ with $g(a_2,b_2)=1$. Let
$a,b\in P\cup Q$ be distinct. If $a,b\in P$ or $a,b\in Q$, the claim follows
from the induction hypothesis. If $a\in P$ and $b\in Q$, concatenating a path
from $a$ to $a_1$ within $P$ (empty if $a=a_1$), the link $a_1\to b_1$, and a
path from $b_1$ to $b$ within $Q$ (empty if $b_1=b$) yields the required path.
The case $a\in Q$ and $b\in P$ is symmetric, via the link $a_2\to b_2$.
\end{proof}

\begin{lemma}\label{lem:pseudo-one-block}
\textnormal{(i)} If the one-block grouping $\{I\}$ can be obtained from
$(\mathcal I,g_{\mathcal I})$ by a finite sequence of mergers, then $(I,g)$
is pseudo-reciprocal.
\textnormal{(ii)} If $(I,g)$ is weakly connected and pseudo-reciprocal, then
$\{I\}$ can be so obtained.
\end{lemma}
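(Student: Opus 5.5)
Part (i) is immediate. The one-block network $(\{I\},g_{\{I\}})$ has only the diagonal entry, which is $0$ by the no-self-loop convention. Hence $g_{\{I\}}(P,Q)=g_{\{I\}}(Q,P)$ holds trivially, and the network is reciprocal. By hypothesis it is reached from $(\mathcal I,g_{\mathcal I})$ by finitely many mergers, so it is a mutual minor, and Definition~\ref{def:pseudo-reciprocity} applies directly.

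For part (ii), fix a reciprocal mutual minor $(\mathcal P,g_{\mathcal P})$ obtained from $(\mathcal I,g_{\mathcal I})$ by some finite merger sequence. The plan is to keep merging until one block remains. I would argue by induction on the number of blocks of $\mathcal P$, and show that whenever $|\mathcal P|\geq 2$ there is a mutual pair whose merger again yields a reciprocal network.

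\emph{Existence of a mutual pair.} Since $(I,g)$ is weakly connected and $\mathcal P$ has at least two blocks, take a block $P$ and an agent $a\notin P$. A weak path from some agent of $P$ to $a$ must at some step cross from inside $P$ to outside it. At that step there are agents $x\in P$ and $y\in Q$, with $Q\neq P$, such that $g(x,y)=1$ or $g(y,x)=1$. By Lemma~\ref{lem:closed-form-mergers}(iii), $g_{\mathcal P}(P,Q)=\bar g(P,Q)$ and $g_{\mathcal P}(Q,P)=\bar g(Q,P)$, so at least one of these equals $1$. Reciprocity of $(\mathcal P,g_{\mathcal P})$ then makes both equal $1$, so $\{P,Q\}\in\mathscr M(\mathcal P,g_{\mathcal P})$.

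\emph{Preservation of reciprocity.} Merging $\{P,Q\}$ gives a network that is still reciprocal by Lemma~\ref{lem:closed-form-mergers}(ii), applied with the reciprocal $(\mathcal P,g_{\mathcal P})$ as the starting network. It has one fewer block and is still a mutual minor of $(\mathcal I,g_{\mathcal I})$, since concatenating two finite merger sequences gives a finite merger sequence. Weak connectedness of $(I,g)$ is unchanged, so the induction step applies again. After $|\mathcal P|-1$ further mergers only $\{I\}$ remains.

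The only delicate point is the crossing argument. It uses the identification of grouped links with $\bar g$, which requires $\mathcal P_0=\mathcal I$ and is exactly Lemma~\ref{lem:closed-form-mergers}(iii). It also needs the observation that the path leaves $P$ at some step because it ends outside $P$. Everything else is routine bookkeeping.
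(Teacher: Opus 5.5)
Your proposal is correct and follows the paper's proof essentially step for step: part (i) uses reciprocity of the one-block network, and part (ii) repeatedly merges two blocks joined by a link, which reciprocity makes mutual, with Lemma~\ref{lem:closed-form-mergers}(ii) preserving reciprocity. The only difference is that you spell out the crossing argument, via Lemma~\ref{lem:closed-form-mergers}(iii), which the paper compresses into the remark that a quotient of a weakly connected network is weakly connected.
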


\begin{proof}
(i) The grouped network on $\{I\}$ is reciprocal. Hence, if $\{I\}$ can be
obtained from $(\mathcal I,g_{\mathcal I})$ by mergers, then $(I,g)$ has a
reciprocal mutual minor and is therefore pseudo-reciprocal.

(ii) Suppose that $(I,g)$ is weakly connected and pseudo-reciprocal. Choose a
reciprocal mutual minor
$(\mathcal Q,g_{\mathcal Q})\preceq(\mathcal I,g_{\mathcal I})$. The grouped
network $(\mathcal Q,g_{\mathcal Q})$ is weakly connected, since it is
obtained by quotienting the weakly connected network $(I,g)$. If
$|\mathcal Q|=1$, there is nothing to prove. If $|\mathcal Q|>1$, weak
connectedness gives distinct blocks $R,S\in\mathcal Q$ linked in at least one
direction. Since $(\mathcal Q,g_{\mathcal Q})$ is reciprocal, they are linked
in both:
\[
    g_{\mathcal Q}(R,S)=g_{\mathcal Q}(S,R)=1.
\]
Thus $R$ and $S$ can be merged. Merging preserves reciprocity by
Lemma~\ref{lem:closed-form-mergers}(ii), and quotienting preserves weak
connectedness. Repeating this operation finitely many times yields the
one-block grouping $\{I\}$.
\end{proof}

\begin{proof}[Proof of Proposition~\ref{prop:confluence}]
Call a grouping \emph{reachable} if a network on it can be obtained from
$(\mathcal I,g_{\mathcal I})$ by a finite (possibly empty) sequence of
mergers. By Lemma~\ref{lem:closed-form-mergers}(iii), such a network is
determined by its grouping, and its mutual pairs are exactly the pairs of
two-sidedly attached blocks. We therefore refer to reachable networks by
their groupings. The groupings obtained from a reachable grouping by a single
merger are its \emph{one-step successors}.

\smallskip
\noindent\emph{Step 1: any two distinct one-step successors of a reachable
grouping have a common one-step successor.}
Let $\{P,Q\}$ and $\{R,S\}$ be distinct mutual pairs of a reachable grouping
$\mathcal P$. Suppose first that they are disjoint. Merging one leaves the
other mutual, since its two blocks remain blocks and two-sided attachment does
not depend on the grouping. Merging both, in either order, yields the grouping
in which $P$ and $Q$ are replaced by $P\cup Q$, and $R$ and $S$ by $R\cup S$.
Suppose next that they share a block, say $Q=R$ with $P\neq S$. After merging
$\{P,Q\}$ we have $\bar g(P\cup Q,S)\geq\bar g(Q,S)=1$ and
$\bar g(S,P\cup Q)\geq\bar g(S,Q)=1$, so $\{P\cup Q,S\}$ is a mutual pair.
Merging it yields the grouping in which $P$, $Q$, and $S$ are replaced by
$P\cup Q\cup S$. Merging $\{Q,S\}$ first and then $\{Q\cup S,P\}$ yields the
same grouping.

\smallskip
\noindent\emph{Step 2: all maximal sequences of mergers from a reachable
grouping terminate at the same endpoint.}
We induct on the number of blocks; the case $\mathcal P=\mathcal I$ is the
proposition. If $\mathcal P$ has no mutual pair, in particular if it has one
block, every maximal sequence from $\mathcal P$ is empty. Otherwise, consider
two maximal sequences from $\mathcal P$, starting with
$\mathcal P\to\mathcal P_1$ and $\mathcal P\to\mathcal P_2$. Both
$\mathcal P_1$ and $\mathcal P_2$ have one block fewer than $\mathcal P$. By
the induction hypothesis, all maximal sequences from $\mathcal P_1$ terminate
at a common endpoint $\mathcal N_1$, and all maximal sequences from
$\mathcal P_2$ terminate at a common endpoint $\mathcal N_2$. The two given
sequences therefore end at $\mathcal N_1$ and $\mathcal N_2$. If
$\mathcal P_1=\mathcal P_2$, then $\mathcal N_1=\mathcal N_2$. Otherwise the
merged pairs are distinct, so Step~1 gives a common one-step successor
$\mathcal P_{12}$. Fix a maximal sequence from $\mathcal P_{12}$, with
endpoint $\mathcal N$. Prepending the step
$\mathcal P_1\to\mathcal P_{12}$ gives a maximal sequence from
$\mathcal P_1$, and prepending $\mathcal P_2\to\mathcal P_{12}$ gives one from
$\mathcal P_2$. Hence $\mathcal N_1=\mathcal N=\mathcal N_2$.
\end{proof}

We denote the terminal minor of $(I,g)$ by $(\mathcal Q^*,g_{\mathcal Q^*})$.
The next lemma relates its blocks to the induced subnetworks of
Section~\ref{sec:local}.

\begin{lemma}\label{lem:blocks-pseudo-reciprocal}
\textnormal{(i)} For every $P\in\mathcal Q^*$, the induced subnetwork
$(P,g_P)$ is strongly connected and pseudo-reciprocal.
\textnormal{(ii)} If $(J,g_J)$ is weakly connected and pseudo-reciprocal,
then $J$ is contained in a single block of $\mathcal Q^*$.
\end{lemma}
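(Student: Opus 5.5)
Part (i) follows by restricting a merger history to a single block; part (ii) follows by simulating a merger sequence of the subnetwork inside the global process and then invoking Proposition~\ref{prop:confluence}.

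\emph{Part (i).} Fix a maximal merger sequence from $(\mathcal I,g_{\mathcal I})$ ending at $\mathcal Q^*$, and let $P\in\mathcal Q^*$.

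Strong connectedness of $(P,g_P)$ is immediate from Lemma~\ref{lem:block-connectivity}. The paths it provides lie inside $P$, so they are paths in $g_P$.

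For pseudo-reciprocity, restrict the merger history to $P$. Every block ever formed is contained in a single block of $\mathcal Q^*$. So the mergers whose merged pair lies inside $P$ form a subsequence that starts from the singletons of $P$ and ends at $\{P\}$. Each step of this subsequence merges two blocks $X,Y\subseteq P$ with $X\rightleftarrows_g Y$ (Lemma~\ref{lem:closed-form-mergers}(iii)). The witnessing links join elements of $X$ and $Y$, both inside $P$, so $X\rightleftarrows_{g_P} Y$ as well.

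The other mergers do not touch the blocks inside $P$. Hence the subsequence is a valid merger sequence in $(P,g_P)$, where reachable networks are again determined by their groupings by Lemma~\ref{lem:closed-form-mergers}(iii) applied to $g_P$. It reaches the one-block grouping $\{P\}$, so Lemma~\ref{lem:pseudo-one-block}(i) gives pseudo-reciprocity.

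\emph{Part (ii).} Let $(J,g_J)$ be weakly connected and pseudo-reciprocal. By Lemma~\ref{lem:pseudo-one-block}(ii), there is a merger sequence in $(J,g_J)$ from the singletons of $J$ to $\{J\}$. I will replay this sequence in $(I,g)$, with all agents outside $J$ left as singletons.

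At each step, the merged blocks $X,Y\subseteq J$ satisfy $X\rightleftarrows_{g_J}Y$. Since $g_J$ is a restriction of $g$, this implies $X\rightleftarrows_g Y$, so the step is a legal merger in $(I,g)$ by Lemma~\ref{lem:closed-form-mergers}(iii). The replay therefore produces a reachable grouping $\mathcal R$ of $I$ in which $J$ is a block.

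Extend this sequence to a maximal one. By Proposition~\ref{prop:confluence}, it terminates at $\mathcal Q^*$. Mergers only coarsen groupings, so $J$ lies inside a single block of $\mathcal Q^*$.

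\emph{Main obstacle.} The only delicate point is the bookkeeping in (i): the restricted subsequence must really be a merger sequence of $(P,g_P)$. The justification is that blocks inside $P$ are affected only by mergers within $P$, and two-sided attachment depends only on links between the two blocks, all of which lie inside $P$. Part (ii) is routine once confluence is available.
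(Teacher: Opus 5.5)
Your proposal is correct and follows the paper's proof essentially step for step. In (i) you restrict a maximal merger sequence to the steps inside $P$, check via Lemma~\ref{lem:closed-form-mergers}(iii) that they are valid mergers of $(P,g_P)$, and conclude with Lemmas~\ref{lem:pseudo-one-block}(i) and \ref{lem:block-connectivity}. In (ii) you replay a merger sequence of $(J,g_J)$ inside $(I,g)$ and conclude with Proposition~\ref{prop:confluence}. The only cosmetic difference is that the paper treats $|J|=1$ separately, since networks are formally defined with at least two agents; in your argument that case is covered by the empty replay.
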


\begin{proof}
(i) Fix a maximal sequence of mergers
$\mathcal I=\mathcal P_0\to\cdots\to\mathcal P_m=\mathcal Q^*$ and a block
$P\in\mathcal Q^*$. Since groupings coarsen along the sequence, every block
of every $\mathcal P_t$ is contained in some block of $\mathcal Q^*$. Hence
the blocks of $\mathcal P_t$ contained in $P$ form a partition of $P$, and
the two blocks merged at any step lie in the same block of $\mathcal Q^*$.
Call a step a \emph{$P$-step} if it merges two blocks contained in $P$; all
other steps leave the blocks contained in $P$ unchanged.

We claim that the $P$-steps form a merger sequence of $(P,g_P)$ from the
singleton grouping of $P$ to $\{P\}$. By
Lemma~\ref{lem:closed-form-mergers}(iii), a $P$-step merges blocks
$A,B\subseteq P$ with $A\rightleftarrows_g B$. The witnessing links have both
endpoints in $P$, so $A$ and $B$ are two-sidedly attached in $(P,g_P)$ as
well. Applying Lemma~\ref{lem:closed-form-mergers}(iii) to $(P,g_P)$, the
pair $\{A,B\}$ is a mutual pair of the grouping of $P$ reached by the
preceding $P$-steps. By induction along the sequence, the $P$-steps are valid
mergers in $(P,g_P)$. They terminate at $\{P\}$, because $P$ is a single
block of $\mathcal Q^*$. Lemma~\ref{lem:pseudo-one-block}(i) then shows that
$(P,g_P)$ is pseudo-reciprocal. Lemma~\ref{lem:block-connectivity}, applied
to the terminal minor, gives a directed path within $P$ from each agent of
$P$ to every other. Hence $(P,g_P)$ is strongly connected.

(ii) If $|J|=1$, the claim is trivial. Otherwise, by
Lemma~\ref{lem:pseudo-one-block}(ii) applied to $(J,g_J)$, some merger
sequence carries the singleton grouping of $J$ to $\{J\}$. Perform the same
steps in $(\mathcal I,g_{\mathcal I})$, leaving each singleton $\{i\}$ with
$i\notin J$ untouched. Each pair merged is two-sidedly attached in $(J,g_J)$,
hence in $(I,g)$, and is therefore a mutual pair of the current grouping by
Lemma~\ref{lem:closed-form-mergers}(iii). The resulting sequence reaches a
grouping in which $J$ is a block. Extend it to a maximal sequence, which
terminates at $\mathcal Q^*$ by Proposition~\ref{prop:confluence}. Since
groupings only coarsen along the sequence, $J$ is contained in a block of
$\mathcal Q^*$.
\end{proof}

\subsection{Fixed points and agreement}

\begin{lemma}\label{lem:fixed-point-characterization}
For a decision environment $\mathscr D=(\Omega,D,f)$ and a profile $\Pi\in\boldsymbol\Pi^n$,
\[
    \Pi\in F_{\mathscr D}(I,g)
    \quad\Longleftrightarrow\quad
    W(\Pi_j)\leq \Pi_i\text{ for every }i\in I\text{ and every }j\in S(i).
\]
\end{lemma}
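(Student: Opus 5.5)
The argument is a direct unwinding of the definitions of the join in $(\boldsymbol\Pi,\leq)$ and of $F_{\mathscr D}(I,g)$. By definition, $\Pi\in F_{\mathscr D}(I,g)$ means $h_i(\Pi)=\Pi_i$ for every $i$, that is,
\[
    \Pi_i\vee\bigvee_{j\in S(i)}W(\Pi_j)=\Pi_i\qquad\text{for all }i\in I .
\]
The whole proof consists in translating this identity, for each fixed $i$, into the family of inequalities $W(\Pi_j)\leq\Pi_i$, $j\in S(i)$. I will treat each $i$ separately and use only the lattice fact that in any complete lattice $x\vee y=x$ holds if and only if $y\leq x$, together with the fact that $\bigvee_{j\in S}y_j\leq x$ holds if and only if $y_j\leq x$ for every $j\in S$.

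For the forward direction, suppose $h_i(\Pi)=\Pi_i$ and take $j\in S(i)$. Since a join is an upper bound of its arguments, $W(\Pi_j)\leq\bigvee_{k\in S(i)}W(\Pi_k)\leq h_i(\Pi)=\Pi_i$. For the converse, suppose $W(\Pi_j)\leq\Pi_i$ for all $j\in S(i)$. Then $\Pi_i$ is an upper bound of $\{W(\Pi_j)\}_{j\in S(i)}$, so by the least-upper-bound property $\bigvee_{j\in S(i)}W(\Pi_j)\leq\Pi_i$, and therefore $h_i(\Pi)=\Pi_i\vee\bigvee_{j\in S(i)}W(\Pi_j)=\Pi_i$.

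The case $S(i)=\emptyset$ has to be checked against the paper's convention that the empty join is the coarsest partition $\{\Omega\}$. This is the bottom element of $(\boldsymbol\Pi,\leq)$, so $h_i(\Pi)=\Pi_i$ holds automatically, and the right-hand condition is vacuous for that $i$. Consequently no step is a genuine obstacle. The only point needing care is the orientation of $\leq$: coarser partitions are \emph{smaller}, so the join is the coarsest common refinement. If one wants to avoid appealing to abstract lattice facts, the same argument can be phrased concretely in terms of cells: $\Pi_i\vee\Pi'$ has cells $\Pi_i(\omega)\cap\Pi'(\omega)$, and this equals $\Pi_i(\omega)$ for every $\omega$ exactly when $\Pi_i(\omega)\subseteq\Pi'(\omega)$ for every $\omega$, i.e.\ when $\Pi'\leq\Pi_i$.
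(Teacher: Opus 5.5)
Your proposal is correct and takes essentially the same approach as the paper: for each $i$, unwind $h_i(\Pi)=\Pi_i$ via the upper-bound and least-upper-bound properties of the join, and dispose of $S(i)=\emptyset$ using the convention that the empty join is the bottom element $\{\Omega\}$. Your version simply writes out the lattice steps that the paper compresses into a single chain of equivalences.
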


\begin{proof}
For each $i\in I$,
\[
    h_i(\Pi)=\Pi_i
    \quad\Longleftrightarrow\quad
    \Pi_i\vee\bigvee_{j\in S(i)}W(\Pi_j)=\Pi_i
    \quad\Longleftrightarrow\quad
    W(\Pi_j)\leq\Pi_i\quad \forall j\in S(i),
\]
where the condition is vacuous when $S(i)=\emptyset$, since the empty join is
$\{\Omega\}$ and hence $\Pi_i\vee\{\Omega\}=\Pi_i$.
\end{proof}

The next lemma is essentially the classical agreement theorem of \citet{cave} and
\citet{bacharach}.

\begin{lemma}\label{lem:cave-bacharach}
Let $\mathscr D=(\Omega,D,f)$ be a decision environment. If two partitions $\Pi,\Pi^{\prime}$ satisfy $W(\Pi)\leq\Pi^{\prime}$ and $W(\Pi^{\prime})\leq\Pi$, then $f\circ\Pi=f\circ\Pi^{\prime}$.
\end{lemma}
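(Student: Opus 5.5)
Fix $\omega\in\Omega$ and set $d:=f(\Pi(\omega))$. The goal is $f(\Pi'(\omega))=d$; the reverse direction will then follow by the symmetry of the hypotheses. The tool is the sure-thing principle, applied to the cell $W(\Pi)(\omega)$ written as a union of cells of $\Pi'$.

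\emph{Step 1: every cell of $\Pi$ inside $W(\Pi)(\omega)$ has decision $d$.} By definition of the working partition, $W(\Pi)(\omega)=\{\omega':f(\Pi(\omega'))=d\}$. This set is the disjoint union of the cells $\Pi(\omega')$ with $\omega'\in W(\Pi)(\omega)$, and each of these cells satisfies $f(\Pi(\omega'))=d$.

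\emph{Step 2: write $W(\Pi)(\omega)$ as a union of $\Pi'$-cells.} Since $W(\Pi)\le\Pi'$, every cell of $\Pi'$ lies inside a cell of $W(\Pi)$. Hence $W(\Pi)(\omega)$ is a union of $\Pi'$-cells, and in particular $\Pi'(\omega)\subseteq W(\Pi)(\omega)$.

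\emph{Step 3: apply STP, using both hypotheses.} A first idea is to argue that each $\Pi'$-cell inside $W(\Pi)(\omega)$ has decision $d$. That is what we want to prove, so the argument would be circular. Instead, use $W(\Pi')\le\Pi$: every $\Pi$-cell is contained in a cell of $W(\Pi')$.

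Now fix a value $e\in D$ and consider the event $E_e:=\{\omega':f(\Pi'(\omega'))=e\}$, which is a cell of $W(\Pi')$ or empty. By the hypothesis $W(\Pi')\le\Pi$, $E_e$ is a union of $\Pi$-cells. It is also a union of $\Pi'$-cells, all with decision $e$. Because $W(\Pi)\le\Pi'$, the set $E_e\cap W(\Pi)(\omega)$ is a union both of $\Pi$-cells and of $\Pi'$-cells.

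Set $e:=f(\Pi'(\omega))$ and $G:=E_e\cap W(\Pi)(\omega)$. This set is non-empty, since it contains $\omega$. Because $G$ is a disjoint union of $\Pi$-cells, each with decision $d$, STP gives $f(G)=d$. Because $G$ is also a disjoint union of $\Pi'$-cells, each with decision $e$, STP gives $f(G)=e$. Hence $e=d$. STP applies to arbitrary non-empty families of pairwise disjoint events, so infinite families of cells cause no difficulty.

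The main obstacle is finding the event $G$ that is simultaneously a union of cells of both partitions, and each of the two inclusions is used exactly once to obtain it. The rest is bookkeeping: check that each cell is non-empty and that the unions are taken over disjoint families, as STP requires.
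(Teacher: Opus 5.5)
Your proof is correct. It shares the paper's basic strategy: find one event containing $\omega$ that is simultaneously a union of $\Pi$-cells and a union of $\Pi'$-cells, with each decision map constant on it, and apply STP once to each decomposition. The difference is the choice of event.

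The paper uses the meet cell $C=(\Pi\wedge\Pi')(\omega)$. That $C$ is a union of cells of both partitions is then automatic. The work goes into showing that $f\circ\Pi$ is constant on $C$. The paper does this by describing $C$ as the set of states reachable from $\omega$ through chains of $\Pi$- and $\Pi'$-cells, and by noting that $W(\Pi)\le\Pi'$ makes $f\circ\Pi$ constant on $\Pi'$-cells; the argument for $f\circ\Pi'$ is symmetric.

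You use $G=W(\Pi)(\omega)\cap W(\Pi')(\omega)$, which is the cell of $W(\Pi)\vee W(\Pi')$ at $\omega$. Here the roles are reversed. Constancy of both decision maps on $G$ holds by definition. Each hypothesis is used exactly once to show that one working cell is a union of cells of the other partition: $W(\Pi)\le\Pi'$ handles $W(\Pi)(\omega)$, and $W(\Pi')\le\Pi$ handles $W(\Pi')(\omega)$.

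Your version therefore avoids the chain characterization of the meet in the partition lattice, so it is slightly more self-contained. The paper's $C$ is the smallest event that works (indeed $C\subseteq G$), but the lemma never needs minimality.

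One minor point: once you have $e=d$ you are finished. That is already the equality $f(\Pi(\omega))=f(\Pi'(\omega))$ for arbitrary $\omega$, so no ``reverse direction'' by symmetry is needed.
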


\begin{proof}
Fix $\omega\in\Omega$ and let $C:=(\Pi\wedge\Pi^{\prime})(\omega)$. The map
$f\circ\Pi$ is constant on every $\Pi$-cell by definition. It is also
constant on every $\Pi^{\prime}$-cell, because $W(\Pi)\leq\Pi^{\prime}$. The
cell $C$ is the set of states reachable from $\omega$ by finitely many steps,
each staying within a $\Pi$-cell or within a $\Pi^{\prime}$-cell. Hence
$f\circ\Pi$ is constant on $C$. Applying STP to the $\Pi$-cells partitioning
$C$ gives $f(C)=f(\Pi(\omega))$, and symmetrically
$f(C)=f(\Pi^{\prime}(\omega))$. Thus
$f(\Pi(\omega))=f(\Pi^{\prime}(\omega))$ for every $\omega$.
\end{proof}

\begin{lemma}\label{lem:block-consensus}
Let $\mathscr D=(\Omega,D,f)$ be a decision environment, and let $(\mathcal P,g_{\mathcal P})\preceq(\mathcal I,g_{\mathcal I})$. If
$\Pi\in F_{\mathscr D}(I,g)$, then for every block $P\in\mathcal P$, we have $f_a^\Pi=f_b^\Pi$ for any $a,b\in P$.
\end{lemma}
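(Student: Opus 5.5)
We will argue by induction along a merger sequence
$\mathcal I=\mathcal P_0\to\mathcal P_1\to\cdots\to\mathcal P_m=\mathcal P$, proving the stronger statement that for every $t$ and every block $P\in\mathcal P_t$, the decision rules $f_a^\Pi$ agree for all $a\in P$. The base case $t=0$ is trivial since blocks are singletons. For the inductive step, suppose $\mathcal P_{t+1}$ arises by merging a mutual pair $\{P,Q\}$ of $\mathcal P_t$. Only the new block $P\cup Q$ needs checking. By the induction hypothesis there are common rules $d_P:=f_a^\Pi$ for $a\in P$ and $d_Q:=f_b^\Pi$ for $b\in Q$. It then suffices to show $d_P=d_Q$.

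To do so, we aggregate information within each block. Define the partitions $\Pi_P:=\bigwedge_{a\in P}\Pi_a$ and $\Pi_Q:=\bigwedge_{b\in Q}\Pi_b$, the finest common coarsenings. Since $d_P=f\circ\Pi_a$ is constant on each $\Pi_a$-cell for every $a\in P$, it is constant on each cell of $\Pi_P$. This is the same chain argument as in Lemma~\ref{lem:cave-bacharach}: a $\Pi_P$-cell consists of states reachable by finitely many steps within cells of the various $\Pi_a$. Fixing any $a\in P$, the cell $\Pi_P(\omega)$ is a disjoint union of $\Pi_a$-cells, on each of which $f$ takes the value $d_P(\omega)$. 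STP therefore gives $f(\Pi_P(\omega))=d_P(\omega)$. Hence $f\circ\Pi_P=d_P$, and in particular $W(\Pi_P)=W(\Pi_a)$ for every $a\in P$; similarly for $Q$.

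We now show $W(\Pi_P)\le\Pi_Q$ and $W(\Pi_Q)\le\Pi_P$, so that Lemma~\ref{lem:cave-bacharach} yields $f\circ\Pi_P=f\circ\Pi_Q$, that is, $d_P=d_Q$. Since $\{P,Q\}$ is mutual, Lemma~\ref{lem:closed-form-mergers}(iii) provides $a_1\in P$ and $b_1\in Q$ with $g(a_1,b_1)=1$. Lemma~\ref{lem:fixed-point-characterization} then gives $W(\Pi_{a_1})\le\Pi_{b_1}$. We need more, namely $W(\Pi_P)\le\Pi_b$ for every $b\in Q$, because $\Pi_Q$ is the meet of all the $\Pi_b$. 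This comes from the block-wide agreement in $Q$: $W(\Pi_b)=W(\Pi_Q)$ for all $b\in Q$, and $W(\Pi_{b_1})\le\Pi_{b_1}$. Note also that $W(\Pi_P)$ is measurable with respect to $d_P$, and we have $W(\Pi_P)=W(\Pi_{a_1})\le\Pi_{b_1}$. The remaining task is to get from $\Pi_{b_1}$ to $\Pi_Q$.

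This passage is the main obstacle. The plan is to use that $W(\Pi_P)$ is the partition generated by the function $d_P$ alone. The goal is to show that $d_P$ is constant on every $\Pi_Q$-cell. For this it suffices that $d_P$ is constant on each $\Pi_b$-cell for all $b\in Q$, which holds for $b_1$. For other $b\in Q$ we propagate along the internal path structure of $Q$ given by Lemma~\ref{lem:block-connectivity}. If $g(c,b)=1$ with $c\in Q$, then $W(\Pi_c)\le\Pi_b$. If a step fails to propagate this way, we fall back on a direct symmetric argument.

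The fallback goes as follows. Let $C:=(\Pi_P\wedge\Pi_{b_1})(\omega)$ and $C':=(\Pi_Q\wedge\Pi_{a_2})(\omega)$, where $a_2\in Q$ and $b_2\in P$ are the endpoints of the reverse link, with $g(a_2,b_2)=1$. From $W(\Pi_{a_1})\le\Pi_{b_1}$ and $W(\Pi_{a_2})\le\Pi_{b_2}$, apply the chain argument of Lemma~\ref{lem:cave-bacharach} to the pair of partitions $\Pi_P$ and $\Pi_Q$ directly. The function $d_P$ is constant on $\Pi_P$-cells and on $\Pi_{b_1}$-cells. The function $d_Q$ is constant on $\Pi_Q$-cells and on $\Pi_{b_2}$-cells. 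Now restrict to $\Pi_{b_1}$ and $\Pi_{b_2}$, which lie in $Q$ and $P$ respectively. We have $d_P$ constant on $\Pi_{b_1}$-cells and on $\Pi_{b_2}$-cells (since $b_2\in P$), and likewise $d_Q$ constant on both. So both $d_P$ and $d_Q$ are constant on the cells of $\Pi_{b_1}\wedge\Pi_{b_2}$. Applying STP to $\Pi_{b_2}$-cells gives $f(C)=d_P$, and applying it to $\Pi_{b_1}$-cells gives $f(C)=d_Q$, on each such cell $C$. Hence $d_P=d_Q$.

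This last argument actually bypasses the need for $\Pi_P$ and $\Pi_Q$ entirely, so it will be the version written up.
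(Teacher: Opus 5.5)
Your final argument is correct and is essentially the paper's proof. The paper likewise uses the induction hypothesis to turn the cross-links $a_1\to b_1$ and $a_2\to b_2$ into $W(\Pi_{b_2})\le\Pi_{b_1}$ and $W(\Pi_{b_1})\le\Pi_{b_2}$ (your statements that $d_P$ is constant on $\Pi_{b_1}$-cells and $d_Q$ is constant on $\Pi_{b_2}$-cells), and then applies Lemma~\ref{lem:cave-bacharach}, which you re-derive inline on the cells of $\Pi_{b_1}\wedge\Pi_{b_2}$; in the write-up, drop the $\Pi_P,\Pi_Q$ detour and the unjustified path-propagation step, and fix the clash between your first definition of $C$ as $(\Pi_P\wedge\Pi_{b_1})(\omega)$ and its later use as a cell of $\Pi_{b_1}\wedge\Pi_{b_2}$.
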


\begin{proof}
We induct on the length of the merger sequence producing $\mathcal P$. Singleton
blocks satisfy the claim. Suppose the claim holds at $\mathcal P_t$ and
$\mathcal P_{t+1}$ is obtained by merging a mutual pair $P,Q\in\mathcal P_t$.
Only the new block $P\cup Q$ needs to be checked.

By the induction hypothesis, all agents in $P$ share a decision rule $f_P$,
and all agents in $Q$ share a decision rule $f_Q$. Since $P$ and $Q$ are
mutually linked in the grouped network,
Lemma~\ref{lem:closed-form-mergers}(iii) yields agents $a_1\in P,\ b_1\in Q$
and $a_2\in Q,\ b_2\in P$ with $g(a_1,b_1)=1$ and $g(a_2,b_2)=1$.
Lemma~\ref{lem:fixed-point-characterization} gives
$W(\Pi_{a_1})\leq\Pi_{b_1}$ and $W(\Pi_{a_2})\leq\Pi_{b_2}$.

Because $a_1,b_2\in P$ and $a_2,b_1\in Q$, the induction hypothesis gives
$f_{a_1}^\Pi=f_{b_2}^\Pi$ and $f_{a_2}^\Pi=f_{b_1}^\Pi$. By definition,
$W(\Pi_i)$ depends on $\Pi_i$ only through $f_i^\Pi$. Hence
$W(\Pi_{a_1})=W(\Pi_{b_2})$ and $W(\Pi_{a_2})=W(\Pi_{b_1})$, so that
$W(\Pi_{b_2})\leq\Pi_{b_1}$ and $W(\Pi_{b_1})\leq\Pi_{b_2}$. By
Lemma~\ref{lem:cave-bacharach}, $f_{b_1}^\Pi=f_{b_2}^\Pi$. Therefore
$f_P=f_Q$, and all agents in $P\cup Q$ share the same decision rule.
\end{proof}

\subsection{The separating environment}\label{sec:proofs-separation}

\begin{lemma}\label{lem:separation}
Let $(\mathcal Q,g_{\mathcal Q})$ be a grouped network with $|\mathcal Q|\geq 2$
and $\mathscr M(\mathcal Q,g_{\mathcal Q})=\emptyset$. Then there exist a
decision environment $\mathscr D=(\Omega,D,f)$ and partitions
$(\Pi_R)_{R\in\mathcal Q}$ of $\Omega$, one for each block, with the following
two properties:
\begin{equation}\label{eq:block-fixed-point}
    W(\Pi_S)\leq\Pi_R
    \qquad\text{whenever }g_{\mathcal Q}(S,R)=1;
\end{equation}
and $f\circ\Pi_R\neq f\circ\Pi_S$ for all distinct blocks
$R,S\in\mathcal Q$.
\end{lemma}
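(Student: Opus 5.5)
The plan is an explicit finite construction. The model is the classical Parikh–Krasucki-type disagreement example, and the decision function is the ``minimum-rank'' rule from Section~\ref{sec:preliminaries}. The only structural fact available is $\mathscr M(\mathcal Q,g_{\mathcal Q})=\emptyset$: for any two distinct blocks $R,S$, at most one of $g_{\mathcal Q}(S,R)$ and $g_{\mathcal Q}(R,S)$ equals $1$. So information about decisions may flow from $S$ to $R$ or from $R$ to $S$, never both ways. The construction must exploit this so that $R$ can learn $S$'s decision without $S$ ever having to learn $R$'s. Lemma~\ref{lem:cave-bacharach} shows that two-way knowledge of decisions would force agreement, so it is exactly what must be avoided.

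\emph{Decision function.} Take $\Omega$ finite, a ranking $c:\Omega\to\mathbb N$, and $f(E):=\min_{\omega\in E}c(\omega)$ (with an arbitrary value on $\emptyset$). This is the rule $f_C$ with nested feasible sets, so it satisfies STP: if disjoint events all have minimum $d$, so does their union. Under $f$, the working partition $W(\Pi_R)$ groups states according to the minimal rank in their $\Pi_R$-cell. Condition \eqref{eq:block-fixed-point} then says: whenever $S\to R$, every $\Pi_R$-cell lies inside a level set of $\omega\mapsto\min c(\Pi_S(\omega))$.

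\emph{State space and partitions.} I would take $\Omega$ to contain, for each block $R$, a private ``signal'' component that only $R$ is uncertain about. Concretely:
\begin{itemize}
\item $\Omega=\{0,1\}^{\mathcal Q}$, or a slightly larger product if needed, with ranks $c$ chosen injective in a lexicographic way.
\item $\Pi_R$ reveals every coordinate except $x_R$. So $R$ is ignorant only of its own coordinate, and the rule $\min c$ over the two-point cell $\{x,x^{R}\}$ (flip coordinate $R$) always picks the lower-ranked of the two.
\end{itemize}
The decision $f_R(x)$ then depends on all coordinates except $x_R$. A block $R$ with an in-link from $S$ knows $x_S$ but not $x_R$. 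So I would choose $c$ so that $f_S$ does not depend on $x_R$ for any $R$ that is an out-neighbour of $S$.

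The intended choice of $c$ makes the lexicographic priority follow the orientation of links: roughly, the ranking compares coordinates in an order under which $S$'s decision depends only on coordinates ``upstream'' of its out-neighbours. Because there is no mutual pair, such dependencies never need to run both ways along a single link. Distinctness of decisions comes from the fact that $f_R$ ignores $x_R$ while $f_S$ ignores $x_S$. Choosing $c$ generically should then give states $x$ with $f_R(x)\neq f_S(x)$.

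\emph{Main obstacle.} The hard step is making the ranking compatible with directed cycles of length at least $3$ in $(\mathcal Q,g_{\mathcal Q})$. These are allowed, since only $2$-cycles are excluded, and there a single global lexicographic order cannot respect every link orientation. My first attempt would be to enlarge $\Omega$ to one coordinate per \emph{link} $(S,R)$ rather than per block, with $\Pi_R$ hiding exactly the coordinates on links out of $R$. Then $f_S$ depends only on coordinates $R$ observes whenever $S\to R$, since the link $R\to S$ is absent and $S$'s hidden coordinates are disjoint from $R$'s. I would then verify \eqref{eq:block-fixed-point} link by link, and exhibit for each pair of blocks a state where the two minima differ. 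If that fails, the fallback is to mimic the three-agent cycle counterexample and glue such gadgets along each cycle, using the absence of reverse links to keep the gadgets independent.
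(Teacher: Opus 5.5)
\textbf{There is a genuine gap, and it lies in the decision rule, not the state space.}

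The rule $f(E)=\min_{\omega\in E}c(\omega)$ is monotone under inclusion, and that alone forces agreement around any directed cycle. Let $g_{\mathcal Q}(S,R)=1$ and suppose \eqref{eq:block-fixed-point} holds, so $f\circ\Pi_S$ is constant on each cell $\Pi_R(\omega)$. Since $\omega'\in\Pi_S(\omega')$, we have $c(\omega')\geq f(\Pi_S(\omega'))$, and therefore
\[
    f(\Pi_R(\omega))=\min_{\omega'\in\Pi_R(\omega)}c(\omega')
    \;\geq\;\min_{\omega'\in\Pi_R(\omega)}f(\Pi_S(\omega'))
    =f(\Pi_S(\omega)).
\]
So along every link the receiver's decision weakly exceeds the sender's, at every state.

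The directed $3$-cycle $R_1\to R_2\to R_3\to R_1$ has no mutual pair, so the lemma must cover it. It is exactly the case behind Remark~\ref{rem:link-count}. Yet \eqref{eq:block-fixed-point} then gives $f\circ\Pi_{R_1}\leq f\circ\Pi_{R_2}\leq f\circ\Pi_{R_3}\leq f\circ\Pi_{R_1}$, which is agreement. This holds for every $\Omega$, every ranking $c$ and every choice of partitions. Hence neither the block-coordinate design, nor the link-coordinate one, nor gluing cycle gadgets can succeed while $f$ is a min-rank rule. The same argument rules out every inclusion-monotone rule satisfying STP, including $f_C$: apply STP to the union of the $\Pi_S$-cells meeting $\Pi_R(\omega)$.

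A separate problem: in the link-coordinate variant, your claim that $f_S$ ignores the coordinates hidden from $R$ does not follow. $f_S$ is merely $\Pi_S$-measurable, and $\Pi_S$ reveals the coordinates on links out of $R$.

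\textbf{What is missing} is a decision function with unordered values, built from the network. The paper keeps $\Omega=\{0,1\}^{\mathcal Q}$ but reverses your information structure: $\Pi_R$ reveals $\omega_R$ together with $\omega_S$ for every in-neighbour $S$ of $R$. It then defines $f$ as follows:
\begin{itemize}
    \item $f(E)=a$ if $E$ is a disjoint union of $\Pi_T$-cells contained in $\{\omega_T=0\}$, where $T$ may vary across cells;
    \item $f(E)=b_R$ if $E$ is a non-empty union of $\Pi_R$-cells inside $\{\omega_R=1\}$;
    \item $f(E)=E$ otherwise.
\end{itemize}
The absence of mutual pairs is used precisely to show these classes are disjoint, so $f$ is well defined. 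STP holds because the first two classes are closed under disjoint unions and the third case is injective. The working partition $W(\Pi_R)=\{\{\omega_R=0\},\{\omega_R=1\}\}$ discloses only $\omega_R$, which every out-neighbour of $R$ observes; this gives \eqref{eq:block-fixed-point}. Finally, $b_R\notin\{a,b_S\}$ on $\{\omega_R=1\}$ gives separation.
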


In the construction below, each block $R$ carries a binary attribute. The
partition $\Pi_R$ reveals the attributes of $R$ and of the blocks sending to
$R$, while the decision made under $\Pi_R$ discloses the attribute of $R$
alone. The absence of mutual pairs is precisely what makes this disclosure
rule well defined (Step~2).

\begin{proof}
\emph{Step 1: construction.}
For each block $R\in\mathcal Q$, define its closed in-neighborhood in the
grouped network by $N_R:=\{R\}\cup\{S\in\mathcal Q:g_{\mathcal Q}(S,R)=1\}$.
Let $\Omega:=\{0,1\}^{\mathcal Q}$. For $\omega\in\Omega$, write $\omega_R$
for the coordinate indexed by $R$. Define $\Pi_R$ to be the partition
generated by the coordinates in $N_R$, so that
$\omega\equiv_{\Pi_R}\omega^{\prime}$ if and only if
$\omega_T=\omega^{\prime}_T$ for every $T\in N_R$. For $b\in\{0,1\}$, let
$E_{R,b}:=\{\omega\in\Omega:\omega_R=b\}$. Since $R\in N_R$, every
$\Pi_R$-cell is contained in $E_{R,0}$ or in $E_{R,1}$; a $\Pi_R$-cell
contained in $E_{R,0}$ is called a \emph{0-cell of $R$}.

Let $\mathcal A$ be the collection of all non-empty events that can be
written as a disjoint union of 0-cells, where the 0-cells may belong to
different blocks. For each $R\in\mathcal Q$, let $\mathcal U_R$ be the
collection of all non-empty events $E\subseteq\Omega$ such that $E$ is a
union of $\Pi_R$-cells and $E\subseteq E_{R,1}$. Let $a$ and
$(b_R)_{R\in\mathcal Q}$ be distinct symbols outside $2^{\Omega}$, and set
$D:=\{a\}\cup\{b_R:R\in\mathcal Q\}\cup 2^{\Omega}$. Define
$f:2^{\Omega}\to D$ by
\begin{equation*}
    f(E):=
    \begin{cases}
    a, & E\in\mathcal A,\\
    b_R, & E\in\mathcal U_R,\\
    E, & \text{otherwise.}
    \end{cases}
\end{equation*}

\smallskip
\noindent\emph{Step 2: $f$ is well defined.}
First, $\mathcal U_R\cap\mathcal U_S=\emptyset$ for distinct
$R,S\in\mathcal Q$. Suppose $E\in\mathcal U_R\cap\mathcal U_S$. The event $E$
is non-empty and is a union of $\Pi_R$-cells. Hence, if $S\notin N_R$,
flipping only the $S$-coordinate of any state in $E$ would keep the state in
$E$, contradicting $E\subseteq E_{S,1}$. Thus $S\in N_R$, and symmetrically
$R\in N_S$. These inclusions imply
$g_{\mathcal Q}(S,R)=g_{\mathcal Q}(R,S)=1$, which contradicts
$\mathscr M(\mathcal Q,g_{\mathcal Q})=\emptyset$.

Second, $\mathcal A\cap\mathcal U_R=\emptyset$ for every $R\in\mathcal Q$.
Suppose $U\in\mathcal A\cap\mathcal U_R$, and write
$U=\bigsqcup_{k=1}^{K}C_k$, where each $C_k$ is a 0-cell of some block
$S_k$. Fix $k$. The cell $C_k$ fixes exactly the coordinates in $N_{S_k}$,
and $C_k\subseteq U\subseteq E_{R,1}$. If $R\notin N_{S_k}$, flipping the
$R$-coordinate of a state in $C_k$ would keep it in $C_k$, contradicting
$C_k\subseteq E_{R,1}$. Hence $R\in N_{S_k}$. Moreover $S_k\neq R$, since
$\omega_{S_k}=0$ and $\omega_R=1$ on $C_k$. Thus
$g_{\mathcal Q}(R,S_k)=1$, and
$\mathscr M(\mathcal Q,g_{\mathcal Q})=\emptyset$ forces
$g_{\mathcal Q}(S_k,R)=0$, that is, $S_k\notin N_R$. Now take any
$\omega\in U$, and let $\omega^*$ be the state obtained from $\omega$ by
setting the coordinates $S_1,\ldots,S_K$ to $1$. The state $\omega^*$ agrees
with $\omega$ on every coordinate in $N_R$, and $U$ is a union of
$\Pi_R$-cells, so $\omega^*\in U$. But $\omega^*_{S_k}=1$ implies
$\omega^*\notin C_k$ for every $k$, contradicting
$U=\bigsqcup_{k=1}^{K}C_k$.

\smallskip
\noindent\emph{Step 3: $f$ satisfies STP.}
Let $\mathcal E$ be a non-empty family of pairwise disjoint non-empty events
such that $f(E)=d$ for every $E\in\mathcal E$. If $d=a$, then each member of
$\mathcal E$ is a disjoint union of 0-cells. Since the members are pairwise
disjoint, $\bigcup_{E\in\mathcal E}E$ is again a non-empty disjoint union of
0-cells, so it belongs to $\mathcal A$ and receives value $a$. If $d=b_R$,
then each member is a non-empty union of $\Pi_R$-cells contained in
$E_{R,1}$, and so is their union, which therefore belongs to $\mathcal U_R$
and receives value $b_R$. If $d\in 2^{\Omega}$, then every member of
$\mathcal E$ falls under the third case and equals $d$. Pairwise disjointness
and non-emptiness then give $\mathcal E=\{d\}$, so
$\bigcup_{E\in\mathcal E}E=d$ and $f(d)=d$.

\smallskip
\noindent\emph{Step 4: working partitions and the fixed-point condition.}
Fix $R\in\mathcal Q$. Every $\Pi_R$-cell contained in $E_{R,0}$ is a 0-cell
of $R$ and belongs to $\mathcal A$, receiving value $a$. Every $\Pi_R$-cell
contained in $E_{R,1}$ belongs to $\mathcal U_R$, receiving value $b_R$.
Since $a\neq b_R$, it follows that $W(\Pi_R)=\{E_{R,0},E_{R,1}\}$. Now
suppose $g_{\mathcal Q}(S,R)=1$. Then $S\in N_R$, so every $\Pi_R$-cell fixes
the $S$-coordinate. Thus $\Pi_R$ refines $\{E_{S,0},E_{S,1}\}=W(\Pi_S)$, or
equivalently $W(\Pi_S)\leq\Pi_R$. This proves
\eqref{eq:block-fixed-point}.

\smallskip
\noindent\emph{Step 5: separation.}
Let $R,S\in\mathcal Q$ be distinct blocks, which exist since
$|\mathcal Q|\geq 2$, and choose $\omega\in\Omega$ with $\omega_R=1$.
Then $f(\Pi_R(\omega))=b_R$, whereas $f(\Pi_S(\omega))\in\{a,b_S\}$. Since
$b_R\notin\{a,b_S\}$, we conclude $f\circ\Pi_R\neq f\circ\Pi_S$.
\end{proof}

The separating environment lifts from the terminal minor to the original
network as follows.

\begin{corollary}\label{cor:lift}
If $|\mathcal Q^*|\geq 2$, then there exist a finite decision environment
$\mathscr D=(\Omega,D,f)$ and a profile $\Pi\in F_{\mathscr D}(I,g)$ such
that $f_i^\Pi\neq f_j^\Pi$ whenever $i$ and $j$ belong to distinct blocks
of $\mathcal Q^*$.
\end{corollary}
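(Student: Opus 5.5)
The plan is to apply Lemma~\ref{lem:separation} to the terminal minor $(\mathcal Q^*,g_{\mathcal Q^*})$ and pull the resulting block partitions back to the agents. The terminal minor is the endpoint of a maximal merger sequence, so it has no mutual pair: $\mathscr M(\mathcal Q^*,g_{\mathcal Q^*})=\emptyset$. Since $|\mathcal Q^*|\geq 2$ by hypothesis, Lemma~\ref{lem:separation} provides a decision environment $\mathscr D=(\Omega,D,f)$ and partitions $(\Pi_R)_{R\in\mathcal Q^*}$. These satisfy \eqref{eq:block-fixed-point}, and they give pairwise distinct decision rules $f\circ\Pi_R$ across blocks.

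\textbf{Finiteness.} In the construction, $\Omega=\{0,1\}^{\mathcal Q^*}$ is finite, so $2^\Omega$ is finite. The decision set $D=\{a\}\cup\{b_R\}\cup 2^\Omega$ is therefore finite as well, and the environment is a finite one as required.

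\textbf{Profile and fixed point.} For each agent $i\in I$, let $R(i)$ be the block of $\mathcal Q^*$ containing $i$, and set $\Pi_i:=\Pi_{R(i)}$. By Lemma~\ref{lem:fixed-point-characterization}, it suffices to show $W(\Pi_j)\leq\Pi_i$ for every link $g(j,i)=1$. Take such a link.
\begin{itemize}
\item If $R(j)=R(i)$, then $\Pi_j=\Pi_i$, and $W(\Pi_i)\leq\Pi_i$ holds because a working partition is always coarser than the partition it comes from.
\item If $R(j)\neq R(i)$, then Lemma~\ref{lem:closed-form-mergers}(iii) gives $g_{\mathcal Q^*}(R(j),R(i))=\bar g(R(j),R(i))\geq g(j,i)=1$. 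Condition \eqref{eq:block-fixed-point} then yields $W(\Pi_{R(j)})\leq\Pi_{R(i)}$, which is exactly $W(\Pi_j)\leq\Pi_i$.
\end{itemize}
Hence $\Pi\in F_{\mathscr D}(I,g)$.

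\textbf{Disagreement.} If $i$ and $j$ lie in distinct blocks, then $f_i^\Pi=f\circ\Pi_{R(i)}\neq f\circ\Pi_{R(j)}=f_j^\Pi$ by the separation property of Lemma~\ref{lem:separation}.

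The only point needing care is the translation from agent-level links to block-level links, and Lemma~\ref{lem:closed-form-mergers}(iii) handles it directly. I expect no real obstacle; the substantive work was already done in Lemma~\ref{lem:separation}.
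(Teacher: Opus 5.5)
Your proposal is correct and follows the paper's proof essentially step for step. You apply Lemma~\ref{lem:separation} to the terminal minor, which has no mutual pairs, and set $\Pi_i:=\Pi_{R(i)}$. You then verify the fixed-point condition link by link via Lemma~\ref{lem:fixed-point-characterization} and Lemma~\ref{lem:closed-form-mergers}(iii), and read off disagreement from the separation property. Your explicit finiteness check of $\Omega$ and $D$ is a welcome addition that the paper only mentions in passing.
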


\begin{proof}
By the definition of the terminal minor,
$\mathscr M(\mathcal Q^*,g_{\mathcal Q^*})=\emptyset$, so
Lemma~\ref{lem:separation} applies to $(\mathcal Q^*,g_{\mathcal Q^*})$ and
yields a decision environment $\mathscr D$, with finite state space
$\{0,1\}^{\mathcal Q^*}$, and partitions
$(\Pi_R)_{R\in\mathcal Q^*}$. For each $i\in I$, let $R_i$ be the block of
$\mathcal Q^*$ containing $i$, and set $\Pi_i:=\Pi_{R_i}$.

We verify that $\Pi\in F_{\mathscr D}(I,g)$. Let $u\to v$ be a link of
$(I,g)$. If $R_u=R_v$, then $W(\Pi_u)\leq\Pi_u=\Pi_v$, since every working
partition is coarser than the partition inducing it. If $R_u\neq R_v$, then
$g_{\mathcal Q^*}(R_u,R_v)=\bar g(R_u,R_v)\geq g(u,v)=1$ by
Lemma~\ref{lem:closed-form-mergers}(iii), and \eqref{eq:block-fixed-point}
gives $W(\Pi_u)=W(\Pi_{R_u})\leq\Pi_{R_v}=\Pi_v$. Hence $W(\Pi_u)\leq\Pi_v$
for every link, and Lemma~\ref{lem:fixed-point-characterization} gives
$\Pi\in F_{\mathscr D}(I,g)$.

Finally, if $R_i\neq R_j$, then
$f_i^\Pi=f\circ\Pi_{R_i}\neq f\circ\Pi_{R_j}=f_j^\Pi$ by
Lemma~\ref{lem:separation}.
\end{proof}

\subsection{Proofs of the main results}

\begin{proof}[Proof of Theorem~\ref{thm:pseudo-reciprocity}]
\emph{Step 1: pseudo-reciprocity implies that the network ensures consensus.}
Assume that $(I,g)$ is pseudo-reciprocal. Since $(I,g)$ is weakly connected
(Assumption~\ref{ass:weak-connectedness}),
Lemma~\ref{lem:pseudo-one-block}(ii) shows that the one-block grouping $\{I\}$
is obtainable from $(\mathcal I,g_{\mathcal I})$ by a finite sequence of
mergers.

Fix any decision environment $\mathscr D=(\Omega,D,f)$ and any
$\Pi\in F_{\mathscr D}(I,g)$. Applying
Lemma~\ref{lem:block-consensus} to the one-block grouping gives
\[
    f_i^\Pi=f_j^\Pi
    \qquad\forall i,j\in I.
\]
Thus $\Pi\in A_{\mathscr D}$. Since $\mathscr D$ and $\Pi$ were arbitrary,
$(I,g)$ ensures consensus.

\smallskip
\noindent\emph{Step 2: failure of pseudo-reciprocity implies that the network does not ensure consensus.}
Assume that $(I,g)$ is not pseudo-reciprocal. Then $|\mathcal Q^*|\geq 2$:
otherwise $\mathcal Q^*=\{I\}$, and the one-block grouping would be a
reciprocal mutual minor of $(\mathcal I,g_{\mathcal I})$, contradicting the
assumption. By Corollary~\ref{cor:lift}, there exist a decision environment
$\mathscr D$ and a profile $\Pi\in F_{\mathscr D}(I,g)$ with
$f_i^\Pi\neq f_j^\Pi$ for any agents $i$ and $j$ in distinct blocks of
$\mathcal Q^*$; such agents exist because $|\mathcal Q^*|\geq 2$. Thus
$\Pi\notin A_{\mathscr D}$, and $(I,g)$ fails to ensure consensus.
\end{proof}

\begin{proof}[Proof of Corollary~\ref{cor:greedy}]
\textnormal{(i)} $\Rightarrow$ \textnormal{(iii)}: by
Assumption~\ref{ass:weak-connectedness} and
Lemma~\ref{lem:pseudo-one-block}(ii), some sequence of mergers reaches
$\{I\}$, and it is maximal because $\{I\}$ has no mutual pair. By
Proposition~\ref{prop:confluence}, every maximal sequence then terminates at
$\{I\}$. \textnormal{(iii)} implies \textnormal{(ii)} because a maximal
sequence exists, and \textnormal{(ii)} implies \textnormal{(i)} because the
one-block grouping is reciprocal.
\end{proof}

\begin{proof}[Proof of Corollary~\ref{cor:single-core-characterization}]
By Theorem~\ref{thm:pseudo-reciprocity}, \textnormal{(i)} and \textnormal{(ii)}
are equivalent. For the equivalence of \textnormal{(ii)} and
\textnormal{(iii)}, recall from Lemma~\ref{lem:closed-form-mergers}(iii)
that, in any grouped network obtained from $(\mathcal I,g_{\mathcal I})$ by
mergers, two blocks form a mutual pair if and only if they are two-sidedly
attached.

\smallskip
\noindent\emph{\textnormal{(iii)} $\Rightarrow$ \textnormal{(ii)}.}
Let $v_1,\ldots,v_n$ be a single-core order. Since
$\{v_1,v_2\}\in\mathscr M(I,g)$, the singletons $\{v_1\}$ and $\{v_2\}$ are a
mutual pair and merge into the block $M_2$. Suppose the core $M_k$ has been
formed as a single block, with $2\le k<n$. By \eqref{eq:single-core-order},
$M_k\rightleftarrows_g\{v_{k+1}\}$, so $\{M_k,\{v_{k+1}\}\}$ is a mutual pair
and merges into $M_{k+1}$. Iterating this step reaches the one-block grouping
$\{I\}$, so $(I,g)$ is pseudo-reciprocal by
Lemma~\ref{lem:pseudo-one-block}(i). With
$\mathscr M(I,g)=\{\{v_1,v_2\}\}$ from \eqref{eq:single-core-order},
\textnormal{(ii)} follows.

\smallskip
\noindent\emph{\textnormal{(ii)} $\Rightarrow$ \textnormal{(iii)}.}
Suppose $(I,g)$ has a unique direct mutual pair $\{p,q\}$ and is
pseudo-reciprocal. By Assumption~\ref{ass:weak-connectedness} and
Lemma~\ref{lem:pseudo-one-block}(ii), some sequence of mergers carries
$(\mathcal I,g_{\mathcal I})$ to $\{I\}$; fix one. Each merger acts on a
mutual pair of the current grouping. The first acts on singletons, hence on a
direct mutual pair, so by uniqueness it merges $\{p\}$ and $\{q\}$ into
$M_2:=\{p,q\}$. No two singletons can be merged thereafter, as that would
exhibit a second direct mutual pair. Hence exactly one block is non-singleton
from then on, and each merger attaches a singleton $\{v_{k+1}\}$ to the
current core $M_k$, which requires $M_k\rightleftarrows_g\{v_{k+1}\}$. Since
the sequence reaches $\{I\}$, this orders all agents. Setting $v_1:=p$ and
$v_2:=q$ yields \eqref{eq:single-core-order}, so $(I,g)$ admits a single-core
order.
\end{proof}

\begin{proof}[Proof of Proposition~\ref{prop:link-count}]
Suppose $(I,g)$ ensures consensus. By Theorem~\ref{thm:pseudo-reciprocity},
$(I,g)$ is pseudo-reciprocal. Assumption~\ref{ass:weak-connectedness} and
Lemma~\ref{lem:pseudo-one-block}(ii) then give a merger sequence
$\mathcal I=\mathcal P_0\to\cdots\to\mathcal P_{n-1}=\{I\}$. Fix
$t\in\{1,\ldots,n-1\}$ and consider the $t$-th step, which merges two blocks
$P,Q\in\mathcal P_{t-1}$. By Lemma~\ref{lem:closed-form-mergers}(iii), there
are links $(a,b)$ and $(c,d)$ with $g(a,b)=g(c,d)=1$, $a,d\in P$, and
$b,c\in Q$; they are distinct, since $a\in P$ and $c\in Q$. Call a link
\emph{internal} at stage $t$ if its endpoints lie in the same block of
$\mathcal P_t$. The two links above are not internal at stage $t-1$ but are
internal at stage $t$. Since the groupings coarsen along the sequence, a link
internal at some stage remains internal at all later stages. Hence the pairs
of links associated with distinct steps are disjoint, and the $n-1$ steps
yield $2(n-1)$ distinct links.

For the second statement, fix $n\geq 2$ and let $I=\{v_1,\ldots,v_n\}$ with
$g(v_1,v_2)=g(v_2,v_1)=1$, $g(v_1,v_{k+1})=g(v_{k+1},v_k)=1$ for each
$k=2,\ldots,n-1$, and no other links. Let $m\geq 3$. The only link out of
$v_m$ ends at $v_{m-1}$, and the links into $v_m$ originate at $v_1$ or,
when $m<n$, at $v_{m+1}$. Since $v_{m-1}\notin\{v_1,v_{m+1}\}$, no direct
mutual pair involves $v_m$. Hence $\mathscr M(I,g)=\{\{v_1,v_2\}\}$. Moreover
$M_k\rightleftarrows_g\{v_{k+1}\}$ for every $k=2,\ldots,n-1$, so
$v_1,\ldots,v_n$ is a single-core order, and the network ensures consensus by
Corollary~\ref{cor:single-core-characterization}. It has exactly $2(n-1)$
links. Figure~\ref{fig:reciprocal-core} depicts another network for which
equality holds.
\end{proof}

\begin{proof}[Proof of Theorem~\ref{thm:decomposition}]
\textnormal{(i)} $\Rightarrow$ \textnormal{(ii)}: suppose \textnormal{(ii)}
fails. If $i$ and $j$ were in the same block of $\mathcal Q^*$, that block
would witness \textnormal{(ii)} by
Lemma~\ref{lem:blocks-pseudo-reciprocal}(i). Hence $i$ and $j$ lie in
distinct blocks of $\mathcal Q^*$, and in particular $|\mathcal Q^*|\geq 2$.
Corollary~\ref{cor:lift} then yields a decision environment $\mathscr D$
and a profile $\Pi\in F_{\mathscr D}(I,g)$ with $f_i^\Pi\neq f_j^\Pi$, so
\textnormal{(i)} fails.

\textnormal{(ii)} $\Rightarrow$ \textnormal{(iii)}: by
Lemma~\ref{lem:blocks-pseudo-reciprocal}(ii), $J$ is contained in a single
block of $\mathcal Q^*$, which then contains both $i$ and $j$.

\textnormal{(iii)} $\Rightarrow$ \textnormal{(i)}: since
$(\mathcal Q^*,g_{\mathcal Q^*})\preceq(\mathcal I,g_{\mathcal I})$,
Lemma~\ref{lem:block-consensus} gives $f_i^\Pi=f_j^\Pi$ for every decision
environment $\mathscr D$ and every $\Pi\in F_{\mathscr D}(I,g)$.

Finally, every block of $\mathcal Q^*$ induces a weakly connected,
pseudo-reciprocal subnetwork by
Lemma~\ref{lem:blocks-pseudo-reciprocal}(i), and every set inducing one is
contained in a block by Lemma~\ref{lem:blocks-pseudo-reciprocal}(ii). Since
distinct blocks are disjoint, a set inducing such a subnetwork and
containing a block coincides with that block, so each block is maximal.
Conversely, a maximal such set is contained in a block, which induces such
a subnetwork itself, and therefore coincides with it.
\end{proof}

\section*{Acknowledgements}
I am grateful to Nozomu Muto, Norio Takeoka, and Hendrik Rommeswinkel for
their valuable comments and suggestions.

\end{document}